\newtheorem{example}{Example}
\newtheorem{theorem}{Theorem}
\newtheorem{proposition}{Proposition}
\newtheorem{remark}{Remark}
\newtheorem{definition}{Definition}
\crefname{figure}{figure}{figures}
\let\oldemph\emph
\renewcommand{\emph}[1]{\textcolor{blue!60!black}{\oldemph{#1}}}
\pgfplotsset{compat=1.17,
	legend image code/.code={
		\draw[mark repeat=2,mark phase=2]
		plot coordinates {
			(0cm,0cm)
			(0.15cm,0cm)        %
			(0.3cm,0cm)         %
		};%
}}
\DeclareMathOperator*{\argmax}{argmax}
\DeclareMathOperator{\car}{\#}
\DeclareMathOperator{\conf}{conf}
\DeclareMathOperator{\nconf}{nonconf}
\newcommand{\yes}{\textcolor{mygreen}{\ding{51}}}
\newcommand{\no}{\textcolor{myred}{\ding{55}}}
\newcommand{\Vab}{V^{a\succ b}}
\newcommand{\Vba}{V^{b\succ a}}
\newcommand{\Vxy}{V^{x\succ y}}
\newcommand{\Vyx}{V^{y\succ x}}
\newcommand{\R}{\mathcal{R}}
\newcommand{\fight}{\faPeopleArrows}
\newcommand{\appendixsection}[2]{%
 \gappto{\appendixProofs}{
   \section{Additional Material for Section~\ref{#1}}\label{app:#1}
	 #2
 }
}
\newcommand{\appendixsubsection}[2]{%
 \gappto{\appendixProofs}{
   \subsection{#1}
	 #2
 }
}
\newcommand{\appendixproof}[3]{%
  \gappto{\appendixProofs}
  {
		\subsection{Proof of~{#1}~\ref{#2}}\label{proof:#2}
    #3
  }
}
\title{Selecting the Most Conflicting Pair of Candidates}
\author {
    Théo Delemazure,\textsuperscript{\rm 1}
    Łukasz Janeczko,\textsuperscript{\rm 2}\\
    Andrzej Kaczmarczyk,\textsuperscript{\rm 2}
    Stanisław Szufa,\textsuperscript{\rm 1,2}\vspace{0.3cm}\\
    \textsuperscript{\rm 1}CNRS, LAMSADE, Université Paris Dauphine - PSL,
		Paris, France\\
    \textsuperscript{\rm 2} AGH University, Kraków, Poland\\
    {\small
    theo.delemazure@dauphine.eu,
  	ljaneczk@agh.edu.pl,}\\
		{\small
    andrzej.kaczmarczyk@agh.edu.pl,
    s.szufa@gmail.com}
}
\date{}
\definecolor{myred}{RGB}{201, 22, 22}
\definecolor{mygreen}{RGB}{38, 150, 68}
\begin{document} 

\maketitle

\begin{abstract}
	We study committee elections from a perspective of finding the most
	conflicting candidates, that is, candidates that imply the largest amount of
	conflict, as per voter preferences. By proposing basic axioms to capture this
	objective, we show that none of the prominent multiwinner voting rules meet
	them. Consequently, we design committee voting rules compliant with our desiderata,
	introducing \emph{conflictual voting rules}. A subsequent deepened analysis
	sheds more light on how they operate. Our investigation identifies various
	aspects of conflict, for which we come up with relevant axioms and
	quantitative measures, which may be of independent interest. We support our
	theoretical study with experiments on both real-life and synthetic data.
\end{abstract}

\section{Introduction}

Where a collective decision over a set of options based on a number of opinions
has to be reached,
conflict is inevitable.
Reflected by differences in the opinions, it
usually comes from different perspectives 
of the opinions
(e.g.,\ in the case of human opinions, a beginner investor would likely have
completely different opinion on various asset classes than their professional
counterpart). However, conflict
might also be option-based and
stem from diverse, sometimes even contradicting, inherent qualities
of the options (e.g.,\ potentially high-return assets typically have high risk
levels).

We are interested in how to identify these conflicting options, based on the
preferences. Since the options might represent multiple entities (e.g.,\
sports players, societal issues, or marketing strategies), answering this question has
numerous natural applications that include selecting competitors to organize
engaging sport events (e.g.,\ boxing matches), controversial topics to organize
interesting political debates, disputable issues for socially-relevant
deliberations, or conflicting ideas for boosting the creativity with passionate discussions.

A somewhat different application of identifying conflicting options is learning
new insights about the options or the opinions' perspectives. Imagine a space
agency that validates procedures (options) for landing on the Moon using a
collection of complex simulations. Each simulation assesses the quality of each
procedure and ranks the procedures (expressing an opinion) from the ones that
are most likely to the one that are the least likely to succeed. The existence of two
significantly conflicting procedures can then offer additional insights. It
might suggest that there is some (possibly unknown) feature that impact only
some of the simulations and is ignored by the rest. Alternatively, the two
procedures may differ in some operational detail that is a crucial success factor
for some of the simulated scenarios. In both cases, a careful inspection of the
procedures would help to recognize the source of the conflict and thus
contribute to advancing the explainability of the simulations or the knowledge
about the procedures.

To provide a big picture of our approach, we use a particularly illustrative
application, which is finding polarizing issues.  
Having a collection of
\emph{ordinal preferences} of \emph{voters} expressing their view on the
importance of the issues, we aim at identifying \emph{two} issues that are the
\emph{most conflicting} ones.  This goal poses a significant conceptual
challenge as conflict, which in our scenario can be associated with polarization,
is actually of dual nature. Indeed, we want to find two issues that are (1)
supported by two different, large,
and, ideally, equal-sized groups of voters and that are (2) perceived
ideologically as far from each other as possible in these two groups. As it turns
out, balancing between these two conditions forms an important part of our
modeling of conflict.
A natural motivation behind our goal in this scenario is to reduce the
polarization. This motivation combines
the two aforementioned applications: of selecting two disputable issues and of
fostering learning about the selected issues and the voters. Indeed, learning
about the groups of voters can help in finding communication channels to improve
dialogue between the groups. The conflicting issues, on the other hand, are the
first ones to be addressed by specific policies. Additionally, the selected
issues might indicate particular compromises that must be obeyed not to divide
the society more.

We view our task of finding conflicting candidates naturally falling into the
framework of \emph{multiwinner voting rules}. Such rules, given
\emph{preferences} of \emph{voters} over \emph{candidates} and a desired
committee size, select a winning \emph{committee}\,---\,a subset of the
candidates\,---\,of the desired size that aims at meeting certain objectives.
Adapting this terminology to our scenario, we also have a number of candidates 
and a collection of voters\footnote{For convenience and
increased readability, we decided to keep term ``voters,'' even though
,as demonstrated in the introduction,
our model and its applications are not limited to voting scenarios.}
expressing preferences over the candidates. Our objective
is to select the most conflictual candidates.

In their survey, \citet{fal-sko-sli-tal:b:multiwinner-voting} discuss three
so-far studied objectives of multiwinner voting rules: (1) individual
excellence, focusing on selecting individually the best candidates, (2)
diversity, which aims at representing as many voters as possible, and (3)
proportional representation, which selects a committee that proportionally
represents the voters.
There seems to be a clear consensus that the two former goals are achieved by,
respectively, the $k$-Borda~\citep{deb:j:k-borda} and the
Chamberlin-Courant~\citep{cha-cou:j:cc,elk-fal-sko-sli:j:multiwinner-properties}
rules.
Understanding how various rules 
guarantee achieving proportionality is an active area of
research~\citep{mon:j:fully-proportional-representation,elk-fal-las-sko-sli-tal:c:multiwinner-voting,elk-fal-sko-sli:j:multiwinner-properties,sko-lac-bri-pet-elk:c:proportional-rankings,fal-sko-szu-tal:c:stv-pav,per-pie-sko:c:proportional-pb}.

Focusing on diverse candidates might appear as being able to fulfil our goal of
seeking
conflictual candidates. Intuitively, two most diverse candidates should be
exactly those that divide the voters the most. However, 
as we show in \Cref{sec:axiom-standard}, a fairly small example shows
that this is not the case (in fact, the example shows that none of the standard three
objectives meets our needs). We also verify this considering scenarios beyond
the ``worst-case'' ones in our simulations.

To the best of our knowledge, no rule suitable for finding such conflicting
candidates has been introduced so far. Nonetheless, this line of research
closely aligns with the concept of polarization\footnote{%
	As well as with agreement and diversity of votes, as argued
  by~\citet{fal-kac-sor-szu-was:c:diversity-agreement-polarization}.%
}
in elections extensively studied from
axiomatic, computational, and experimental
perspectives~\citep{alc-vor:j:cohesiveness,has-end:c:diversity-indices,alc-vor:j:cohesiveness2,can-ozk-sto:generalized-polarization,col-gra-hid-mac-nav:c:controlling-rank-aggregation,fal-kac-sor-szu-was:c:diversity-agreement-polarization}.
Yet, the focus has so far been on measuring a value of polarization of all
voters or of a single candidate, which is a different problem than
selecting a pair (or a subset) of conflicting candidates.

\begin{table}[t]
\centering

    \begin{tabular}{ l | c |  c | c | c }
     & \rotatebox{90}{MaxSum} & \rotatebox{90}{MaxNash} & \rotatebox{90}{MaxPolar} & \rotatebox{90}{MaxSwap} \\
     \midrule
     Reverse Stability & \yes & \yes & \yes & \yes \\
     Conflict Consistency & \yes & \yes  & \yes & \yes \\
    \midrule
     Conflict Monotonicity  & \no & \no  & \no & \no \\
     Antagonization Consistency  & \yes & \yes  & \yes &  \yes\\
     Matching Domination & \yes & \yes & \yes &  \no\\
     \midrule 
     Balance Preference & \no & \yes & \no & \yes \\
    \bottomrule
    \end{tabular}
\caption{Axiomatic properties of conflictual rules.}
\label{tab:properties}
\end{table}

\paragraph{Our Contributions.} 
Our primary contribution is conceptual, offering a novel perspective on the
voting theory inspired by finding conflicting candidates. To this end, we
establish new foundational axioms and show that they differ from the
traditional ones. Defining more demanding axioms, we arrive at an impossibility
result that sets our expectations (\Cref{sec:properties}). This result also
guides us in developing new multiwinner voting rules, termed \emph{conflictual
voting rules}, in~\Cref{sec:rules}. They are specifically designed to identify
the most conflicting pair(s) of candidates as required by the introduced axioms (\Cref{tab:properties} highlights our axiomatic analysis). To better understand the intricacies of
the introduced rules, in~\Cref{sec:interpretation}, we provide multiple metrics
capturing different aspects of conflict, which can be of independent interest.

Finally, in~\Cref{sec:experiments}, we present the experimental evaluation, using
both synthetic and real-life elections, empirically showing the behavior of the
conflictual rules.\footnote{The source
code for our experiments is open-source and publically available
at~\url{https://github.com/Project-PRAGMA/conflictual-rules--IJCAI-24}.} Thus, we not only validate our theoretical insights but also
provide insights into their real-world applications and implications.

\section{Preliminaries} \label{sec:prelim}

For a set $X$, let $\car{}X$ denote the cardinality of that set.
Let~$E=(C,V)$ be an election with a set~$C=\{c_1,\dots,c_m\}$ of candidates and
a set~$V=\{v_1,\dots,v_n\}$ of voters. A profile~$P = (\succ_1, \dots,
\succ_n)$ is a collection of rankings (total linear orders) over $C$ such that
each voter $v_i$ is associated with~$\succ_i$; we denote the set of all possible
profiles by~$\mathcal{P}$. By $\overleftarrow{P} = (\overleftarrow{\succ}_1,
\dots, \overleftarrow{\succ}_n)$, we denote profile $P$ with all ballots being
reversed. In particular, for all $v_i \in V$ and all $c,c' \in C$, $c \succ_i
c'$ if and only if $c' \overleftarrow{\succ}_i c$. A committee voting rule $\R$
is a function that takes as input a profile $ P \in \mathcal P$ and a committee
size $k \ge 2$ and outputs a non-empty set~$\R(P,k)$ of committees, such that
for each $W \in \R(P,k)$ we have~$\car{}W = k$ and~$W \subseteq C$. In this document, we focus on the case $k = 2$, and write $\R(P)$ for simplicity.

For brevity, we denote by $v(a)$ the position of candidate~$a$ in vote~$v$. More formally, we have $v_i(a) = \car \{x \mid x \succ_i a \}+1$. Moreover, let $v(a, b)$ denote the distance between $a$ and $b$ in vote $v$, that is, $v(a, b) = v(b) - v(a)$; for conciseness instead of~$v(a,b)$, we
write~$v(ab)$. For instance, in the vote $v = a \succ b \succ c \succ d \succ e$, we have $v(b) = 2$, $v(e) = 5$ and $v(be) = v(b,e) = 5 - 2 = 3$. Analogously $v(eb) = -3$.

Furthermore, for all pairs~$\{a, b\}$ of candidates from~$C$, we denote
$\Vab = \{v_i \in V \mid a \succ_i b \}$ the set of voters preferring $a$ to $b$
in profile $P$. Finally, we say that a pair of candidates $\{a,b\}$ is conflicting
if their ordering is not unanimous, i.e., $\car \Vab > 0$ and $\car \Vba >
0$. In other words, a pair of candidates is conflicting if neither of them 
Pareto-dominates the other one.

\section{Properties of Conflictual Rules} \label{sec:properties}

Before defining our rules, we proceed with fundamental properties that we
require from them. As we will show, already these somewhat weak, basic
axioms prove that the objectives of conflictual rules are far from those of the
rules commonly studied in the computational social choice literature.

\subsection{Fundamental Axioms}

We start by defining two fundamental axioms expected from conflictual rules. The first one assures that unless we have an identity election, only a conflicting pair could win.

\begin{definition}[Conflict Consistency]
	Rule $\R$ is \emph{conflict consistent} if it does not output non-conflicting 	pair if there exists a conflicting one.
\end{definition}

The second axiom comes from the observation that in conflicting voting rule, we want ``love'' and ``hate'' to have symmetrical purposes. Thus, the most polarizing pair in a profile should be as much polarizing if we reverse all voters' preferences. 

\begin{definition}[Reverse Stability]
	Rule $\R$ is \emph{reverse stable} if for each profile $P$ it holds that
$\R(P) = \R(\overleftarrow{P})$.
\end{definition}

\subsection{Relation to Standard Axioms} \label{sec:axiom-standard}

The postulates of our axioms stand in stark contrast to all established
objectives of multiwinner voting rules, that is, to individual excellence,
proportionality, and diversity.
In particular, conflictual rules are incompatible with the unanimity axiom. This
is a very weak axiom guaranteeing effectiveness, saying that if a candidate is
ranked first by every voter, it should be selected in the committee. This is the
main difference with the classical paradigm of social choice: conflictual rules
specifically {\em avoid} consensual candidates.

\begin{definition}[Unanimity]
	Rule $\R$ is \emph{unanimous} if a candidate that appears on top of every ranking is always in the winning committee.
\end{definition}

\begin{proposition}\label{prop:unanimity-consistency-impossibility}
  There is no rule~$\R$ that satisfies both unanimity and conflict-consistency.
\end{proposition}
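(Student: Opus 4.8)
The plan is to exhibit a single small election on which the two axioms make contradictory demands. Concretely, I would take a profile with three candidates, say $C = \{a, b, c\}$, and two voters whose rankings both place $a$ on top: for instance $v_1 = a \succ b \succ c$ and $v_2 = a \succ c \succ b$. Since $a$ appears first in every ranking, unanimity forces $a \in W$ for every $W \in \R(P)$, so the unique output committee of size $2$ must be either $\{a,b\}$ or $\{a,c\}$.

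Next I would observe that both of those candidate pairs are \emph{non-conflicting}: because $a$ Pareto-dominates both $b$ and $c$ (every voter ranks $a$ above each of them), we have $\car \Vba = 0$ and $\car \Vca = 0$, so neither $\{a,b\}$ nor $\{a,c\}$ meets the definition of a conflicting pair. On the other hand, the pair $\{b,c\}$ \emph{is} conflicting, since $v_1$ prefers $b$ to $c$ while $v_2$ prefers $c$ to $b$, giving $\car V^{b \succ c} = \car V^{c \succ b} = 1$. Hence a conflicting pair exists in $P$, and conflict consistency forbids $\R$ from outputting any non-conflicting pair; in particular it forbids both $\{a,b\}$ and $\{a,c\}$. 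Combined with the previous paragraph, $\R(P)$ would have to consist only of committees containing $a$, all of which are non-conflicting, yet conflict consistency rules each of them out, leaving $\R(P)$ empty — contradicting the requirement that a committee voting rule outputs a non-empty set. Therefore no rule can satisfy both axioms.

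There is essentially no hard step here; the proof is just choosing the witnessing instance and checking the definitions. The one point that needs a little care is making sure the example genuinely has a conflicting pair available (otherwise conflict consistency is vacuous and imposes nothing), which is why the third candidate $c$ is included and why I split the two voters on the $b$-vs-$c$ comparison while keeping them unanimous on $a$. I would also state explicitly that $m \ge 3$ is needed for such an example, and note in passing that the same argument scales to any number of voters by duplicating $v_1$ and $v_2$, so the impossibility is not an artifact of having exactly two voters.
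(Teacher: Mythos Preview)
Your proposal is correct and matches the paper's own proof essentially line for line: the paper uses the same profile $\{a \succ b \succ c,\ a \succ c \succ b\}$, observes that unanimity forces $a$ into the committee while the only conflicting pair is $\{b,c\}$, and concludes. Your additional remarks about $m \ge 3$ and duplicating voters are harmless elaborations, but the core argument is identical.
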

\begin{proof}
	Consider the profile $\{a \succ b \succ c, a \succ c \succ b\}$. Unanimity
	implies $a$ belongs to the committee, but the only conflicting pair in this
	profile is $\{b,c\}$.
\end{proof}

The incompatibility stated by~\Cref{prop:unanimity-consistency-impossibility} has
a strong implication. Namely, conflictual rules are different from the very prominent
family of committee scoring rules. The observation follows from the fact that
the latter rules must be unanimous~\citep{sko-fal-sli:j:characterization-csr}.

\subsection{Further Properties of Conflictual Rules}

In this section, we consider more axioms, that sound desirable but are unfortunately not always possible to satisfy.

The first axiom we want to introduce is a kind of Pareto-domination axiom, since it is based on the domination of a pair of candidates by another one. However, the pair needs not to be dominated in every ranking, as in classic Pareto properties.
Given profile~$P$ and two conflicting pairs of candidates~$\{a,b\}$ and $\{x,y\}$ we say that $\{a,b\}$ is \emph{matching-dominating} $\{x,y\}$ if there exists a bijective function $f:V \rightarrow V$ that (1) maps all voters from $\Vab$ to $\Vxy$ and from $\Vba$ to $\Vyx$, 
(2) for all voters $v$, $|v(ab)| \ge |(f(v))(xy)|$ and (3) there exists a voter $v$ such that $|v(ab)| > |(f(v))(xy)|$.

\begin{example}
 Consider profile $\{v_1\colon a \succ x \succ y \succ b,
    v_2\colon a \succ y \succ x \succ b,
    v_3\colon b \succ x \succ a \succ y,
    v_4\colon b \succ y \succ a \succ x\}$ and pairs~$\{a,b\}$ and $\{x,y\}$ of candidates, which gives sets $\Vab = \{v_1, v_2\}$ and $\Vxy = \{v_1, v_3\}$.
    A matching $f =\{v_1 \rightarrow v_1$, $v_2 \rightarrow v_3$, $v_3 \rightarrow v_2$,
    $v_4 \rightarrow v_4\}$ meets condition~(1). Recalling that, e.g.,\ $(f(v_2))(xy)) = v_3(xy) = 2$, it is easy to verify that $|v_i(ab)| \ge |(f(v_i))(xy)|$ for all~$i \in \{1,2,3,4\}$ (condition~(2)). Finally, since $|v_1(ab)| = 3 > |(f(v_1))(xy)| = |v_1(xy)| = 1$, $f$ meets condition~(3).
Hence, the pair $\{a,b\}$ dominates pair $\{x,y\}$.
\end{example}

\begin{definition}[Matching Domination] 
	Rule~$\R$ satisfies \emph{matching domination} if matching-dominated pairs are never selected.
\end{definition}

The next axiom is inspired by the monotonicity notion from the literature of classical social choice. The idea is the following: If one pair is the most conflicting in a given profile, adding more conflict between the two candidates should not make another pair even more conflicting, and be selected instead of the original one. 

\begin{definition}[Conflict Monotonicity]
	Rule~$\R$ is \emph{conflict monotonic} if for each profile $P$ and each selected pair $\{a,b\} \in \R(P)$, it holds that if we increase the distance between $a$
	and $b$ by swapping one of them with its adjacent candidate in one of the votes, $\{a,b\}$ is still selected.
\end{definition}

Unfortunately, we can show that this property is incompatible with conflict consistency and matching domination.

\begin{theorem}
No rule satisfies matching-domination, conflict consistency, and conflict monotonicity.
\end{theorem}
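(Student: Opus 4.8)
The plan is to construct a single small profile in which three pairs of candidates are involved, such that applying conflict monotonicity to a selected pair forces us into a contradiction with either conflict consistency or matching domination. The idea is to set up a profile $P$ where some conflicting pair $\{a,b\}$ is the only candidate for being selected (e.g. because all other conflicting pairs are matching-dominated by $\{a,b\}$, forcing $\{a,b\} \in \R(P)$ by the combination of conflict consistency and matching domination). Then I would perform a single adjacent swap increasing the $a$-$b$ distance in one vote, obtaining a profile $P'$; by conflict monotonicity, $\{a,b\}$ must still be selected in $P'$. The key is to design things so that in $P'$, the pair $\{a,b\}$ has become matching-dominated by some other pair $\{x,y\}$ — so matching domination forbids selecting $\{a,b\}$ in $P'$, a contradiction.

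First I would fix a candidate set of size four, say $C = \{a,b,x,y\}$, and look for a base profile where $\{x,y\}$ is matching-dominated by $\{a,b\}$ but where the single swap that stretches $a$ and $b$ apart simultaneously shrinks the effective $x$-$y$ configuration (or rearranges the voter partition $V^{x \succ y}/V^{y \succ x}$) so that after the swap the domination flips direction: $\{a,b\}$ becomes matching-dominated by $\{x,y\}$. This is delicate because moving $a$ (or $b$) past an adjacent candidate can only change that candidate's position, so the swap must be chosen to move, say, $a$ past $x$ in a vote — increasing $v(ab)$ while also changing $v(xy)$ in the required direction, and possibly moving a voter between $V^{x\succ y}$ and $V^{y\succ x}$. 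I would likely need a few auxiliary candidates or a few extra copies of votes to make the matching-domination bijections work cleanly and to ensure no other conflicting pair interferes (or to make every other conflicting pair also dominated).

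The main obstacle I expect is the tension inside a single profile: to invoke conflict monotonicity I need $\{a,b\}$ \emph{selected} in $P$, which via matching domination means $\{a,b\}$ is \emph{not} matching-dominated in $P$; but after one swap I need $\{a,b\}$ to \emph{become} matching-dominated. Since a single adjacent swap is a very small perturbation, arranging that it tips $\{a,b\}$ from ``dominates $\{x,y\}$'' to ``is dominated by $\{x,y\}$'' requires the two pairs to be nearly balanced in $P$ — essentially the swap must be the deciding move in a tie between how much conflict $\{a,b\}$ and $\{x,y\}$ carry. Concretely I would aim for a profile where $\{a,b\}$ strictly matching-dominates $\{x,y\}$ by exactly one unit of slack (one voter with $|v(ab)| = |(f(v))(xy)| + 1$ and all others tight), positioned so that the stretching swap is applied to a different vote in a way that, in $P'$, produces the reverse strict inequality under the reverse bijection. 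I would then double-check that the swap genuinely increases the $a$-$b$ distance (so conflict monotonicity applies) and that in $P'$ the pair $\{x,y\}$ is still conflicting (so matching domination's hypotheses are met), concluding that no rule can satisfy all three axioms simultaneously.

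Finally, I would present the concrete profile explicitly, exhibit the required bijections in both $P$ and $P'$, verify conditions (1)–(3) of matching domination in each direction, and note that conflict consistency is only needed to rule out degenerate non-conflicting outputs (ensuring the chosen pair is forced). If a four-candidate construction turns out to be too rigid, I would fall back to adding a fifth ``buffer'' candidate placed identically across all votes so that it never participates in any conflicting pair but gives room to realize the needed position changes under an adjacent swap.
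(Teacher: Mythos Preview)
Your high-level strategy matches the paper's: pin down a selected pair $\{a,b\}$ in some profile $P$ (via conflict consistency and matching domination), apply conflict monotonicity to reach $P'$, and show that $\{a,b\}$ is matching-dominated in $P'$. Where you diverge is in the execution, and two of your planned details make things harder than necessary.

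First, you want to force $\{a,b\}$ to be the \emph{unique} selectable pair in $P$ by having it matching-dominate every other conflicting pair. The paper avoids this entirely: it takes the symmetric two-voter profile $\{a\succ b\succ c\succ d,\; b\succ a\succ d\succ c\}$, whose only conflicting pairs are $\{a,b\}$ and $\{c,d\}$, and simply argues ``WLOG $\{a,b\}$ is selected; the other case is symmetric.'' No domination is needed in $P$ at all.

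Second, and more importantly, you plan to use a pair $\{x,y\}$ \emph{disjoint} from $\{a,b\}$ and have the domination \emph{flip} after the swap. This is the fragile part of your plan: a swap that increases $|v(ab)|$ only makes it \emph{harder} for any fixed pair to dominate $\{a,b\}$, unless that pair's own distances grow in tandem. The paper's trick is precisely to let the dominating pair \emph{share} a candidate with $\{a,b\}$: after pushing $a$ down past $d$ (and $c$) in $v_2$, the pair $\{a,d\}$ --- which was not even conflicting in $P$ --- becomes conflicting and matching-dominates $\{a,b\}$, because moving $a$ simultaneously increased $|v(ad)|$. Your side remark that the swap might move a voter between $V^{x\succ y}$ and $V^{y\succ x}$ is also incorrect: swapping $a$ with an adjacent $x$ shifts $x$'s position by one but cannot change the relative order of $x$ and $y$.

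So the plan is sound in spirit but, as written, you have not exhibited a working construction, and the specific mechanism you describe (flipping domination between disjoint pairs) is unlikely to close without further ideas. The paper's four-candidate, two-voter example with the dominating pair $\{a,d\}$ sharing a candidate with $\{a,b\}$ is the clean way to finish.
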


\begin{proof}
    Let $f$ be a rule satisfying these 3 axioms, and consider the following profile: $\{v_1:a \succ b \succ c \succ d,  v_2:b \succ a \succ d \succ c \}$. 
    By conflict consistency, the only pairs that can be selected are $\{a,b\}$ and $\{c,d\}$. Assume that $\{a,b\}$ is selected. The proof if $\{c,d\}$ is selected is almost the same. Now, consider the profile $\{v_1:a \succ b \succ c \succ d, v_2:b \succ d \succ c \succ a\}$, in which we increased the conflict between $a$ and $b$ in the second vote by swapping $a$ with its neighbors. By conflict monotonicity, $\{a,b\}$ should still be selected.
    Now, the values of $v(ab)$ are $(1,-3)$ and the values of $v(ad)$ are $(-3,2)$. With the matching $f = \{v_1 
 \rightarrow v_2, v_2 \rightarrow v_1\}$, we obtain that $\{a,d\}$ is dominating $\{a,b\}$, thus $\{a,b\}$ cannot be selected by matching-domination. This is a contradiction.
\end{proof}

This example highlights why conflict monotonicity is quite hard to achieve, but an extreme (and weaker) version of it can actually be satisfied by conflictual rules. The idea is that instead of increasing the conflict in only one ranking, we increase it in every ranking at once, and we increase it as much as we can in every ranking.

By \emph{antagonization} of profile~$P$ with respect to pair $\{a,b\}$, denoted by~$P^{ab}$, we refer to a profile~$P$ in which, in all votes from $\Vab$ we shift $a$ to the first position and $b$ to the last position,
and in all votes from $\Vba$ we shift $b$ to the first position and
$a$ to the last position. However, the relative order of all the other candidates  remains the same.

\begin{definition}[Antagonization Consistency]
	Rule $\R$ is \emph{antagonization consistent} if, given a profile~$P$
	and a selected pair~$\{a,b\} \in \R(P)$, $\{a,b\}$ is also selected in~$P^{ab}$.
\end{definition}

\section{Conflictual Rules} \label{sec:rules}

After we made explicit what we expect from conflictual rules by the means of
axioms, we can start searching for such rules. All proofs from this section are
available in~\Cref{apdx:proofs}.

In order to define the rules, we should first define what could be a {\em conflict}, in particular between two voters. Intuitively, there is a conflict induced by a pair of candidates $\{a,b\}$ between two voters if they disagree on the ordering between $a$ and $b$. Moreover, the more distant $a$ and $b$ are in the rankings of the voters, the greater the conflict. Starting with this, we define the sum-conflict ($\conf^+$) and the Nash-conflict ($\conf^{\times}$) as follows:

\begin{definition}[Pairwise conflict]
    For $\circ \in \{+, \times\}$, let the conflict induced by a pair of candidates $a$ and $b$ between two votes $v$ and $v'$ be: 
     \[
        \conf^{\circ}_{v,v'}(a,b) = \begin{cases}
    0 & \text{if } v(ab)\cdot v'(ab) > 0 \\
      |v(ab)|\circ|v'(ba)| & \text{otherwise} \\
    \end{cases}
  \]
\end{definition}

For instance, the Nash (resp. sum) pairwise conflict induced by $a$ and $b$ between votes $a \succ b \succ c$ and $b \succ c \succ a$ is $\conf^\times_{v,v'}(a,b) = 1\times2=2$ (resp. $\conf^+_{v,v'}(a,b) = 1+2=3$).
By extension, for two candidates $a$ and $b$, the conflict is defined as the sum of the pairwise conflict over all possible pairs of voters: $\conf^{\circ}(a,b) = \sum_{v,v' \in V} \conf_{v,v'}^{\circ}(a,b)$.

Then, we can define the rules that select the pairs of candidates that maximize this value.
Hence, we define the {\em MaxSumConflict} rule based on $\conf^+$, $\textrm{MaxSum}(P) = \argmax_{a,b \in C} \conf^+(a,b)$,
which is equivalent to
\[
\argmax_{a,b \in C} \quad \car \Vba \sum_{v \in \Vab} v(ab) +
\car \Vab \sum_{v \in \Vba} v(ba).
\]

Similarly, we define the {\em MaxNashConflict} rule, $\textrm{MaxNash}(P) = \argmax_{a,b \in C} \conf^{\times}(a,b)$,
which is equivalent to
\[
	\argmax_{a,b \in C} \sum_{v \in \Vab} v(ab) \cdot
 \sum_{v \in \Vba} v(ba).
\]

\begin{remark}
Looking at the second rule, it is tempting to define an alternative rule in
which we maximize the sum $ \sum_{v \in \Vab} v(ab) + \sum_{v \in \Vba} v(ba)$
instead of the product. However, such a rule does not satisfy
conflict-consistency, and could elect a non-conflicting pair.%
\footnote{Selecting only from conflicting pairs makes this rule
conflict-consistent, yet somehow non-monotonic. Indeed, consider a
non-conflicting pair~$A$ that scores higher than a conflicting pair~$B$. After
adding a voter equally increasing the pairs' scores and making~$A$ conflicting,
the rule selects~$A$, which is intuitively ``less conflicting.'' So, we do not
consider rules naturally failing conflict-consistency.}
\end{remark}

The following example demonstrates the defined rules.

\begin{example}\label{ex:rules-demo}
Let profile~$P$ over $6$~candidates be:
\begin{align*}
    a \succ x \succ c \succ d \succ y \succ b,\\
    c \succ y \succ b \succ a \succ x \succ d.
\end{align*}
The only conflicting pairs of candidates are~$\{a,b\}$ and~$\{x, y\}$. We have
that
$\conf^{+}(a,b) = 6 = \conf^{+}(x,y)$, 
$\conf^{\times}(a,b) = 5$, and~$\conf^{\times}(x,y) = 3 \times 3 = 9$.
Hence, while in the MaxSum rule both pairs tie, MaxNash clearly selects~$\{x,y\}$.
\end{example}

\begin{proposition}\label{properties:maxsum-maxnash}
    MaxSum and MaxNash satisfy reverse stability, conflict consistency, antagonization consistency, and matching-domination.
\end{proposition}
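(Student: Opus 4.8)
The plan is to verify each of the four axioms separately for both $\conf^+$ and $\conf^\times$, exploiting the fact that the two conflict scores share the same structural skeleton so that most arguments can be phrased uniformly in terms of the function $\conf^\circ$.

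\textbf{Reverse stability.} The key observation is that reversing all ballots sends $v(ab)$ to $\overleftarrow{v}(ab) = -v(ab)$, hence swaps $\Vab$ with $\Vba$ and turns $|v(ab)|$ into $|\overleftarrow{v}(ba)|$ and vice versa. I would check that $\conf^\circ_{v,v'}(a,b)$ is symmetric in the roles of $(v,a,b)$ and $(v',b,a)$ up to this relabeling: the guard condition $v(ab)\cdot v'(ab) > 0$ is invariant under simultaneous sign flip of both factors, and $|v(ab)|\circ|v'(ba)|$ is literally the same quantity after the swap. Summing over all ordered pairs $v,v'$ then gives $\conf^\circ(a,b)$ evaluated on $\overleftarrow P$ equals $\conf^\circ(a,b)$ on $P$ for every pair, so the argmax set is unchanged.

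\textbf{Conflict consistency.} If $\{a,b\}$ is non-conflicting then all voters agree on the order of $a,b$, so for every pair $v,v'$ we have $v(ab)\cdot v'(ab) > 0$ and $\conf^\circ_{v,v'}(a,b) = 0$, giving $\conf^\circ(a,b) = 0$. Conversely, if some conflicting pair $\{x,y\}$ exists, I would exhibit at least one ordered pair $v,v'$ with $x \succ_v y$ and $y \succ_{v'} x$, for which $\conf^\circ_{v,v'}(x,y) = |v(xy)|\circ|v'(yx)| > 0$ since both distances are at least $1$ (and the product or sum of positive integers is positive). Hence the conflicting pair attains a strictly positive score while any non-conflicting pair scores $0$, so the argmax avoids non-conflicting pairs.

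\textbf{Matching domination.} Suppose $\{a,b\}$ matching-dominates $\{x,y\}$ via the bijection $f$. The plan is to rewrite $\conf^\circ(a,b)$ as a sum over \emph{unordered} pairs of voters that disagree on $a$ versus $b$ — equivalently, pairs with one endpoint in $\Vab$ and one in $\Vba$ — of the term $|v(ab)|\circ|v'(ab)|$ (noting $|v'(ba)| = |v'(ab)|$). Condition (1) of matching domination says $f$ maps $\Vab$ bijectively onto $\Vxy$ and $\Vba$ onto $\Vyx$, so it induces a bijection between the ``crossing'' voter-pairs for $\{x,y\}$ and those for $\{a,b\}$; condition (2) gives $|v(ab)| \ge |f(v)(xy)|$ termwise, and since $\circ$ is monotone nondecreasing in each argument on positive integers, each summand for $\{a,b\}$ dominates the corresponding summand for $\{x,y\}$. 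This yields $\conf^\circ(x,y) \le \conf^\circ(a,b)$. For the \emph{strict} inequality needed to exclude $\{x,y\}$ from the argmax I would use condition (3): some voter $v$ has $|v(ab)| > |f(v)(xy)|$, and $v$ participates in at least one crossing pair (because $\{a,b\}$ is conflicting, the opposite side $\Vba$ or $\Vab$ is nonempty), so strictly shrinking $|v(ab)|$ to $|f(v)(xy)|$ strictly decreases that summand; here it matters that both entries are $\ge 1$ so that the summand is genuinely positive and strict monotonicity of $x\mapsto x\circ c$ for $c \ge 1$ applies (for $\times$ this needs $c \ge 1$, which holds). Thus $\conf^\circ(x,y) < \conf^\circ(a,b)$, so $\{x,y\} \notin \R(P)$.

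\textbf{Antagonization consistency.} This is the step I expect to be the main obstacle, because antagonizing $P$ with respect to $\{a,b\}$ changes the positions of \emph{other} candidates too, so one must argue the winning pair stays the same even though the scores of rival pairs also move. The plan: first show that under antagonization with respect to $\{a,b\}$, for every voter $v \in \Vab$ we get $v(ab)$ pushed to its maximum possible value $m-1$ (and symmetrically $m-1$ in absolute value for $v \in \Vba$), so $\conf^\circ(a,b)$ on $P^{ab}$ equals the maximum it could ever be, namely $|\Vab|\cdot|\Vba|\cdot\big((m-1)\circ(m-1)\big)$. Next, for any other conflicting pair $\{x,y\}$ I claim its score cannot increase: sending $a$ to the top and $b$ to the bottom (or vice versa) within a vote is a sequence of adjacent swaps, and I would argue that each such swap either leaves $|v(xy)|$ unchanged or, when it involves $x$ or $y$, cannot increase the relevant crossing contributions — more carefully, the multiset of pairwise distances $\{|v(xy)|\}$ for crossing pairs is not increased, essentially because moving a fixed candidate ($a$ or $b$) past $x$ or $y$ shifts $x$ or $y$ by one position but in a direction constrained by which side of the profile the voter is on; one has to track that voters in $\Vxy$ versus $\Vyx$ behave consistently. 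Combining: the dominated-by-maximum pair $\{a,b\}$ achieves the largest score any pair can have in $P^{ab}$, and it was already in the argmax of $P$, so a short case analysis (if $\{a,b\}$ strictly won in $P$, it still wins; if it tied, the tie is preserved because the tying pair's score did not increase while $\{a,b\}$'s rose to the ceiling) closes the argument. If the monotonicity-of-rival-scores claim turns out delicate, the fallback is to restrict attention to pairs that could plausibly tie and verify the needed inequality directly from the $\Vab/\Vba$ decomposition.
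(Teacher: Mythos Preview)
Your arguments for reverse stability, conflict consistency, and matching domination are correct and track the paper's proof closely (your crossing-pairs decomposition for matching domination is a harmless reformulation of the paper's side-by-side sum comparison).

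The antagonization-consistency step, however, has a genuine gap. Your main line claims that for \emph{any} other pair $\{x,y\}$ the score cannot increase when passing from $P$ to $P^{ab}$. This is false whenever $\{x,y\}$ shares a candidate with $\{a,b\}$. Take $m=5$ and the two votes $v_1\colon c \succ a \succ d_1 \succ d_2 \succ b$ and $v_2\colon b \succ a \succ c \succ d_1 \succ d_2$. Here $\conf^+(a,c)=2$; after antagonizing with respect to $\{a,b\}$ the votes become $a \succ c \succ d_1 \succ d_2 \succ b$ and $b \succ c \succ d_1 \succ d_2 \succ a$, and now $\conf^+(a,c)=4$. So pairs involving $a$ or $b$ can genuinely gain score, and the partition $V^{a\succ c}$ itself can flip. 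Likewise, your assertion that ``$\{a,b\}$ achieves the largest score any pair can have in $P^{ab}$'' is not self-evident: for pairs $\{x,y\}$ disjoint from $\{a,b\}$ one only gets $|v(xy)|\le m-3$, but nothing a priori bounds $\car\Vxy\cdot\car\Vyx$ against $\car\Vab\cdot\car\Vba$.

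The paper's fix, which is essentially your own fallback made precise, is to split into cases. For a pair $\{a,c\}$ with $c\neq b$ it does \emph{not} claim the score of $\{a,c\}$ stays put; instead it compares $\{a,c\}$ directly to $\{a,b\}$ \emph{inside} $P^{ab}$: since $a$ now sits at an extreme in every vote, one has $V^{a\succ c}=\Vab$, $V^{c\succ a}=\Vba$, and $|v(ac)|<|v(ab)|=m-1$ for all $v$, which gives $\conf^\circ(a,c)<\conf^\circ(a,b)$ immediately (and symmetrically for pairs containing $b$). Only for pairs $\{x,y\}$ disjoint from $\{a,b\}$ does the paper invoke the score-cannot-increase claim, and there it is correct because the relative order of $x$ and $y$ is preserved and squeezing $a,b$ to the extremes can only shrink $|v(xy)|$. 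Your sketch conflates the two cases; separating them and making the observation $V^{a\succ c}=\Vab$ explicit is what is missing.
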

\appendixproof{Proposition}{properties:maxsum-maxnash}{%
\begin{proof}
We independently prove each axiom, one by one.

		\paragraph{Reverse Stability.} Both rules score each pair of
		candidates. The computed score only depends on the values of $v(ab)$ and
		$v(ba)$, and is always symmetric with respect to $a$ and $b$. Thus,
		reversing the profile will lead to the same score for each pair of
		candidates.

    \paragraph{Conflict Consistency.} 
    Both rules associate a score to each pair of candidates. If the candidates are non-conflicting, the score is always $0$, but if the pair is conflicting, then the score is strictly positive. Thus, a non-conflicting pair cannot be selected if there is at least one conflicting pair.

    \paragraph{Antagonization Consistency.}
    Let us show that rules MaxSum and MaxNash satisfy this axiom. Assume
		$\{a,b\}$ is selected in a profile and we antagonize them (we push them
		towards their corresponding extremes). It is clear that no other pair
		involving $a$ or $b$ will get a better score. Indeed, consider for instance
		pair $\{a,c\}$ with $c \ne b$. Since $a$ is always first or last in every
		ranking, we have $\Vab = V^{a \succ c}$ and $\Vba = V^{c \succ a}$ and for
		all voters $v$, $|v(ac)| < |v(ab)|$. This is enough to prove that $\{a,c\}$
		gets a strictly lower score than $\{a,b\}$. The same argument works for
		pairs involving $b$ and not $a$. Now, the score of any other pair of
		candidates $\{x,y\}$ can only decrease when we antagonize $a$ and $b$, while
		the score of $\{a,b\}$ can only increase. Indeed, the sets $\Vxy$ and $\Vyx$
		remain identical, and as we push $a$ and $b$ towards the extreme, we can
		only decrease the distance between $x$ and $y$. Thus, there is no way that
		$\{x,y\}$ has a strictly better score than $\{a,b\}$ after the
		antagonization but not before. This concludes the proof that $\{a,b\}$ will
		still be selected after antagonization.

    \paragraph{Matching-Domination.}
		By their definitions, rules MaxSum and MaxNash satisfy this axiom, because
		any matching-dominated pair $\{x,y\}$ would have a strictly lower score than
		any pair $\{a,b\}$ that matching-dominates $\{x,y\}$.
    
    Let us show it formally. Let $f: V \rightarrow V$ be a bijective function satisfying matching domination definition. As $f$ maps elements from $V^{a \succ b}$ and $V^{x \succ y}$ bijectively, $\quad \car \Vab = \quad \car \Vxy$, analogously $\quad \car \Vba = \quad \car \Vyx$. 
    Further, by summing up the inequalities from the definition we get that
    \begin{align*} 
    \sum_{v \in V^{a \succ b}}{v(ab)} = &\sum_{v \in V^{a \succ b}}{|v(ab)|} \\
    \geq& \sum_{v \in V^{a \succ b}}{|f(v)(xy)|} \\
    = &\sum_{v \in V^{x \succ y}}{|v(xy)|} = \sum_{v \in V^{x \succ y}}{v(xy)}
    \end{align*}
    and 
    \begin{align*}
        \sum_{v \in V^{b \succ a}}{v(ba)} &= \sum_{v \in V^{b \succ a}}{|v(ab)|} \\ &\geq \sum_{v \in V^{b \succ a}}{|f(v)(xy)|} \\
        &= \sum_{v \in V^{y \succ x}}{|v(xy)|} = \sum_{v \in V^{y \succ x}}{v(yx)}.
    \end{align*} 
    
		Consequently, $\sum_{v \in V^{a \succ b}}{v(ab)} \geq \sum_{v \in V^{x \succ
		y}}{v(xy)}$ and $\sum_{v \in V^{b \succ a}}{v(ba)} \geq \sum_{v \in V^{y
		\succ x}}{v(yx)}$, whereas at least one of these inequalities must be strict
		because there exists a voter $v \in V: |v(ab)| > |f(v)(xy)|$.
    
    Taking all into account, we see that 
    \begin{align*}
        &\quad \car \Vba \sum_{v \in \Vab} v(ab) + \car \Vab \sum_{v \in \Vba} v(ba) \\
        >& \quad \car \Vyx \sum_{v \in \Vxy} v(xy) + \car \Vxy \sum_{v \in \Vyx} v(yx)
    \end{align*}
    and
    \begin{align*}
        \sum_{v \in \Vab} v(ab) \cdot \sum_{v \in \Vba} v(ba) >  \sum_{v \in \Vxy} v(xy) \cdot \sum_{v \in \Vyx} v(yx),
    \end{align*}
    so MaxSum and MaxNash will never select a matching-dominated pair of candidates.
    \end{proof}

}

Another approach is to select pairs of candidates $\{a, b\}$  that maximize the
minimum number~$\nconf(a,b)$ of swaps of adjacent candidates to make~$\{a, b\}$
a non-conflicting pair in the profile $P$.
We call the respective rule
{\em MaxSwap} and remark that formally MaxSwap selects pairs $\{a,b\}$ such that:
\[
\argmax_{a,b \in C} \min\left(\sum_{v \in \Vab} v(ab), 
 \sum_{v \in \Vba} v(ba)\right).
\]
For demonstration recall~\Cref{ex:rules-demo}, where MaxSwap would select~$\{x,y\}$
as we need~$3$~swaps to make this pair non-conflicting, compared to one swap
required for~$\{a,b\}$.

\begin{proposition}\label{properties:maxswap}
	MaxSwap rule satisfies reverse-stability, conflict consistency, and
	antagonization consistency. MaxSwap fails matching-domination.
\end{proposition}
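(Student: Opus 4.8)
The plan is to derive the three positive properties by small adaptations of the argument used for MaxSum and MaxNash in~\Cref{properties:maxsum-maxnash}, the only genuinely new ingredient being the outer minimum, and then to refute matching-domination by a small profile in which that minimum discards exactly the side on which the domination is strict. Write $s_P(a,b)=\min\bigl(\sum_{v\in\Vab}v(ab),\ \sum_{v\in\Vba}v(ba)\bigr)=\nconf(a,b)$ for the MaxSwap score of a pair. \emph{Reverse stability} is immediate: in $\overleftarrow P$ we have $\overleftarrow{\Vab}=\Vba$, and position reversal ($\overleftarrow v(\cdot)=m+1-v(\cdot)$) turns $\overleftarrow v(ab)$ into $v(ba)$, so the two arguments of the $\min$ defining $s_{\overleftarrow P}(a,b)$ are those defining $s_P(a,b)$, only swapped; since $\min$ is symmetric, $s_{\overleftarrow P}=s_P$ for every pair and $\R(\overleftarrow P)=\R(P)$. \emph{Conflict consistency} is equally short: a non-conflicting pair has one of $\Vab,\Vba$ empty, hence $s_P=0$, whereas a conflicting pair has both sets nonempty with every term at least $1$, hence $s_P\ge 1$; so, once a conflicting pair exists, the optimum of $s_P$ is positive and no non-conflicting pair lies in the argmax.

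For \emph{antagonization consistency}, fix $\{a,b\}\in\R(P)$. If $\{a,b\}$ is non-conflicting then, by conflict consistency, $P$ has no conflicting pair, so all voters rank $C$ identically, the same holds in $P^{ab}$, every pair scores $0$ there, and the claim is trivial; so assume $\min(\car\Vab,\car\Vba)\ge 1$. In $P^{ab}$ the sets $\Vab,\Vba$ are unchanged, and every $v\in\Vab$ now has $a$ first and $b$ last, so $v(ab)=m-1$ there (symmetrically for $\Vba$); thus $s_{P^{ab}}(a,b)=(m-1)\min(\car\Vab,\car\Vba)\ge s_P(a,b)$, using $v(ab)\le m-1$. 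It remains to show no competitor overtakes it. For $\{a,c\}$ with $c\ne b$: since $a$ occupies an extreme of every vote of $P^{ab}$, one gets $V^{a\succ c}=\Vab$, $V^{c\succ a}=\Vba$, and the relevant distances are at most $m-2$ (the opposite extreme being held by $b$), so $s_{P^{ab}}(a,c)\le(m-2)\min(\car\Vab,\car\Vba)<s_{P^{ab}}(a,b)$; pairs through $b$ are symmetric. For $\{x,y\}$ disjoint from $\{a,b\}$: the sets $V^{x\succ y},V^{y\succ x}$ are unchanged and each $x$--$y$ distance in $P^{ab}$ is at most the corresponding one in $P$ (moving $a,b$ to the extremes cannot lengthen it), so $s_{P^{ab}}(x,y)\le s_P(x,y)\le s_P(a,b)\le s_{P^{ab}}(a,b)$. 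Hence $\{a,b\}$ remains a maximizer, i.e.\ $\{a,b\}\in\R(P^{ab})$.

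For the \emph{failure of matching-domination}, I would exhibit $P=\{v_1\colon a\succ c\succ b\succ x\succ y,\ v_2\colon b\succ a\succ c\succ y\succ x\}$. Here $\Vab=\Vxy=\{v_1\}$ and $\Vba=\Vyx=\{v_2\}$, so $\{a,b\}$ and $\{x,y\}$ are both conflicting, and the identity matching witnesses that $\{a,b\}$ matching-dominates $\{x,y\}$: $|v_1(ab)|=2>1=|v_1(xy)|$ and $|v_2(ab)|=1=|v_2(xy)|$. Yet $s_P(a,b)=\min(2,1)=1=\min(1,1)=s_P(x,y)$, and a routine check of the remaining pairs shows that $1$ is the MaxSwap optimum; since MaxSwap returns the entire argmax, $\{x,y\}$ is selected even though it is matching-dominated.

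The one delicate point is antagonization consistency: because the score is a $\min$ of two sums, each competing pair must be ruled out on its own terms --- the pairs through $a$ or $b$ because the opposite extreme is occupied, forcing a strict drop, and the pairs disjoint from $\{a,b\}$ because their distances can only shrink while those of $\{a,b\}$ are pushed to the maximum. Everything else is bookkeeping, and the same blindness of $\min$ to the larger side is precisely what the matching-domination counterexample exploits.
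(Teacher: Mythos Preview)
Your proof is correct and, for the three positive properties, follows essentially the same line as the paper, which simply refers back to the MaxSum/MaxNash argument; you spell out the $\min$-specific bookkeeping (the $\{a,c\}$ and $\{x,y\}$ cases in antagonization consistency) more carefully than the paper does, but the underlying idea is identical.

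The one genuine difference is the counterexample for matching-domination. The paper uses a $12$-voter, $4$-candidate profile ($10\times d\succ b\succ a\succ c$, $1\times a\succ c\succ b\succ d$, $1\times c\succ a\succ d\succ b$) in which $\{c,d\}$ matching-dominates $\{a,b\}$ yet both tie at MaxSwap score~$4$. Your $2$-voter, $5$-candidate example is smaller and more transparent: the domination is strict on exactly the side that the $\min$ throws away, so the mechanism is visible at a glance. Both exploit the same phenomenon --- MaxSwap's $\min$ is blind to improvements on the majority side --- but your example isolates it more cleanly.
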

\appendixproof{Proposition}{properties:maxswap}{%
\begin{proof}

The arguments to prove that MaxSwap satisfies \textbf{Reverse Stability}, \textbf{Conflict Consistency} and \textbf{Antagonization Consistency} are exactly the same as those for MaxSum and MaxNash, hence, we refer to the proof of \Cref{properties:maxsum-maxnash} for these axioms. Now, we provide a counter-example to show that MaxSwap does not satisfy \textbf{Matching-Domination}.

    \paragraph{Matching-Domination.}
		Let us look at the following profile: $P = \{10 \times d \succ b \succ a
		\succ c, 1  \times a \succ c \succ b \succ d, 1  \times c \succ a \succ d
	\succ b\}$. Pairs $\{a,b\}$ and $\{c,d\}$ both need at least $4$ swaps to
	become ranked by all voters in the same order. It can be checked that all
	other pairs of candidates can be made non-conflicting with fewer than (or
	exactly) $4$ swaps. Thus, MaxSwap would select both pairs $\{a,b\}$ and
	$\{c,d\}$ (in fact, it would also select $\{a,d\}$, $\{b,c\}$, but we do not
	focus on them). However, the vector of the $v(ab)$ values is equal to $(10
	\times -1,2,2)$ and the vector of the $v(cd)$ values is equal to $(10 \times
	-3,2,2)$. From these vectors, we clearly see that $\{c,d\}$ dominates
	$\{a,b\}$ with the identity matching $f(v) 
=v$ for all voters $v$. By matching domination, $\{a,b\}$ should not be selected, but it is. Thus, MaxSwap fails this axiom.
\end{proof}%

}

\section{Interpretation} \label{sec:interpretation}
So far, we have introduced several conflictual rules and provided their
axiomatic analysis. We now look further, beyond the somehow worst-case study
that axioms usually offer. We want to understand what are the practical
differences between our rules. To this end, we introduce various notions that
help us interpret the rules' behavior.

\subsection{Partitioning and Discrepancy}\label{sec:antipathy-and-partitioning}

Intuitively, a pair of candidates is perfectly polarizing if there are two groups of equal sizes that have conflicting preferences on this pair, and all voters have very strong opinions of the candidates. 
We adapt these two features to describe our committees in the notions of social \emph{partitioning ratio} and candidates' \emph{discrepancy}. %
On the one hand, candidates $a$ and $b$ provide maximal partitioning ratio if
exactly half of the voters prefer $a$ to $b$. Formally, we define partitioning
ratio~$\alpha(a,b) \in [0,1]$ for candidates~$a$ and~$b$ as
\[
\alpha(a,b) = \frac{2}{n} \min(\car \Vab, \car \Vba).
\]
On the other hand, candidates $a$ and $b$ %
have high discrepancy if
every voter strongly prefers one candidate to the other. In particular, if each voter ranks
either $a$ or $b$ first, and the other candidate last, then the pair $\{a,b\}$ has maximal discrepancy.
Formally, we define the discrepancy~$\beta(a,b) \in [0,1]$ as

\[
\beta(a,b) = \frac{1}{n\cdot(m-1)}\sum_{v \in V} |v(ab)|.
\]

Given a profile $P$, by $\alpha_{\textrm{max}}(P)$ and
$\beta_{\textrm{max}}(P)$ we denote the $\max_{a,b} \alpha(a,b)$ and $\max_{a,b}
\beta(a,b)$, respectively.
First note that $\beta_{\textrm{max}} \in (\nicefrac{1}{3}, 1]$ since there is always a pair of candidates $\{a,b\}$ such that $\beta(a,b) > \nicefrac{1}{3}$. The complete calculations are available in~\Cref{app:sec:antipathy-and-partitioning}, in which we also provide the $\alpha$ and $\beta$ values of characteristic profiles studied by~\citet{fal-kac-sor-szu-was:c:diversity-agreement-polarization} (i.e., identity, uniformity, and antagonism). 

When selecting conflicting pairs, rules must do a trade-off between these two notions.
To build our intuition, we look at discrepancy and partitioning ratio in a
specific case. %
With the new insights we obtain, we develop a family of voting rules based on $\alpha$ and $\beta$.
Let us fix some candidates~$a$ and~$b$ and consider the case in which the value $|v(ab)|$ is the same for all voters $v \in V$; hence, we have
$\beta(a,b) = \nicefrac{|v(ab)|}{(m-1)}$ (for ease of presentation, we omit $a$
and~$b$ and use $\alpha$ and $\beta$ in this paragraph). Then, using some
constant $C_1$ independent of~$a$ and~$b$,
we can express
the sum-conflict between $a$ and~$b$ as follows:
\begin{align*}
		\conf^+(a,b)&= \car \Vba \sum_{v \in \Vab}\!\!v(ab) + \car \Vab \sum_{v \in
		\Vba}\!\!v(ba)  \\
    &= 2\car \Vba\car \Vab \beta \\
    &= 2\left(n\frac{\alpha}{2}\right)\left(n\left(1-\frac{\alpha}{2}\right)\right)\beta = C_1\alpha(2-\alpha)\beta.
\end{align*}
Analogously,
$\conf^\times(a,b) = C_2\alpha(2-\alpha)\beta^2$ and $\nconf(a,b) = C_3\alpha\beta$
(see~\Cref{app:sec:antipathy-and-partitioning} for the derivations).
The
expressions clearly illustrate the tension between discrepancy and partitioning
ratio: while some rules give more weight to pairs that divide the society more equally, other
prefer those that have higher discrepancy.

\appendixsection{sec:antipathy-and-partitioning}{}
\appendixsubsection{Expressing MaxNash and MaxSwap Using Group Discrepancy Imbalance}{
For these two rules, the groups $\Vab$ and $\Vba$ have symmetric roles. Let us assume w.l.o.g. that $\car \Vab \ge \car \Vba$. Then, we have $\car \Vba = n\nicefrac{\alpha}{2}$ and $\car \Vab = n(1-\nicefrac{\alpha}{2})$.
For MaxNash, the score of $\{a,b\}$ would be
 \begin{align*}
     S(a,b) &= \sum_{v \in \Vab}v(ab) \cdot \sum_{v \in \Vba}v(ba)  \\
     &= \car \Vab(n(m-1)\beta)\car \Vba(n(m-1)\beta)\\
     &= (n(m-1))^2n^2\nicefrac{\alpha}{2}(1-\nicefrac{\alpha}{2})\beta^2 \\ 
     &=C\alpha(2-\alpha)\beta, 
 \end{align*}
where $C$ is a constant independent of $\{a,b\}$.

For MaxSwap, the score of a pair of candidates $\{a,b\}$ would be 
 \begin{align*}
     S(a,b) &= \min(\sum_{v \in \Vab}v(ab), \sum_{v \in \Vba}v(ba))  \\
     &= \min(\car \Vab, \car \Vba)n(m-1)\beta \\ 
     &= C\alpha\beta,
 \end{align*}
where $C$ is a constant independent of $\{a,b\}$.
}

The observed trade-off forms in fact a flexible framework for a family of rules
covering the whole spectrum of possible behavior.
Let $\mathcal R$ be some rule endowed with a scoring function~$S(\cdot, \cdot):[0,1]^2 \rightarrow \mathbb R_{\ge 0}$
that selects pairs of candidates~$\{a,b\}$ which maximize value~$S(\alpha(a,b),
\beta(a,b))$. Each such rule~$\mathcal R$ naturally satisfies reverse-stability.
Moreover, the rule satisfies conflict consistency if and only if $S(\alpha, \beta) = 0$ when $\alpha = 0$ and $S(\alpha,\beta) > 0$ otherwise. If  $S$ is strictly increasing with both $\alpha$ and $\beta$ when $\alpha > 0$, then the rule based on $S$ also satisfies antagonism-consistency and matching-domination.
In particular, these conditions are met by the family of scoring functions $S(\alpha, \beta) = \alpha\beta^p$ for $p > 0$, that we named \emph{$p$-MaxPolarization} rules
(\emph{$p$-MaxPolar} in short). Note that, the higher the
value of $p$, the more weight is put on discrepancy in comparison to partitioning ratio.
The natural dependence on discrepancy and partitioning ratio
motivates us to include MaxPolar rules into our further analysis.

We visualize how the outcome of our rules depends on discrepancy and
partitioning ratio using the following example. Consider a preference profile
over four candidates~$a_1$, $b_1$, $a_2$, and $b_2$ in which a pair~$\{a_1, b_1\}$ has maximal discrepancy
and a pair~$\{a_2, b_2\}$ has maximal partitioning ratio;
hence, we have
$\beta(a_1,b_1) = 1$ and $\alpha(a_2,b_2) = 1$. Understanding the interplay between
discrepancy and partition ratio boils down to the question: Which pair is preferred in our
scenario for different values of the non-fixed values $\beta(a_2,b_2)$
and~$\alpha(a_1,b_1)$? For clarity, we once again consider our simplified case in
which~$\beta(a_2,b_2) = \nicefrac{|v(a_2,b_2)|}{m-1}$ for all voters $v$. 
\Cref{fig:frontier-rules} depicts how the preferred pair depends on the non-fixed values 
for various rules. For example, for $2$-MaxPolar, the pair~$\{a_1,b_1\}$ is preferred if
$\alpha(a_2,b_2)\beta^2(a_2,b_2) < \alpha(a_1,b_1)\beta^2(a_1,b_1)$, which boils down
to $\alpha(a_2,b_2) < \alpha(a_1,b_1)$ for our simplified scenario. Hence, the figure contains
the plot of~$\alpha(a_1,b_1) = \beta^2(a_2,b_2)$, where pair~$\{a_1,b_1\}$ wins for each point
over the curve. Similarly, for MaxNash, where $S(\alpha, \beta)
=\alpha(2-\alpha)\beta^2$ in our simplified scenario, we see the plot of
$\alpha(2-\alpha) = \beta^2$ (omitting the arguments for~$\alpha$ and~$\beta$ for readability).%

\begin{figure}[t]
    \centering
    \scalebox{.6}{\input{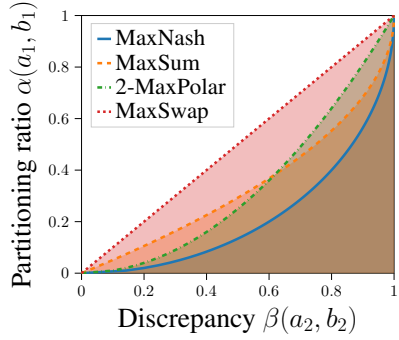}}
    \caption{Area where $\{a_2,b_2\}$ is preferred to $\{a_1,b_1\}$ for different rules. We assume $\alpha(a_2,b_2) = 1$ and $\beta(a_1,b_1) = 1$. }
    \label{fig:frontier-rules}
\end{figure}

\subsection{Polarization Balance} \label{sec:balance}

Partitioning ratio and discrepancy enable us to interpret the rules we
introduced in \Cref{sec:rules} in the special case described above. By this, we
learn whether these rules give more importance to voters having strong
opinions or to be evenly split. Yet, there is another phenomenon distinguishing
these rules from the MaxPolar rules, and that is the discrepancy balance.

To see this, recall profile~$P$ from~\Cref{ex:rules-demo}.
Here, pairs $\{a,b\}$ and $\{x,y\}$ have the same partitioning ratio $\alpha = 1$ and
discrepancy value $\beta = 3/5$. However,
the discrepancy between $a$ and~$b$ is not balanced between the two voters: $a$'s supporter is very extreme in its preferences, while $b$'s supporter is almost indifferent between the two. In comparison, the discrepancy between $x$ and $y$ is perfectly balanced.
MaxPolar rules are agnostic to this phenomenon, while rules from
\Cref{sec:rules} take it into account and would select pair~$\{x,y\}$.

This leads us to introducing our third metric called the \emph{discrepancy balance} $\gamma$. 
By $\mu(a,b) = \frac{\sum_{v\in \Vab} v(ab)}{\car \Vab}$ we denote the average discrepancy between $a$ and $b$ among supporters of $a$. For the discrepancy to be balanced, we want $\mu(a,b)$ and $\mu(b,a)$ to be as similar as possible. Thus, we define it as follows:
\[
\gamma(a,b) =  \min\left(\nicefrac{\mu(a,b)}{\mu(b,a)},\nicefrac{\mu(b,a)}{\mu(a,b)}\right)
\]

Measures $\alpha, \beta$ and $\gamma$ are not independent from each other.
For instance, $\beta(a,b) = 1$ and $\alpha(a,b) > 0$ implies $\gamma(a,b) = 1$
(but $\gamma(a,b) = 1$ does not imply anything for $\beta(a,b)$). Intuitively,
the value of~$\gamma$ has more influence on rules like MaxSwap and MaxNash than
on MaxSum, as the former are more egalitarian than the latter. This intuition is
supported by the experiments presented in \Cref{sec:experiments}.

Note that $\alpha$ measures if the electorate is divided evenly between two candidates
and $\gamma$ measures if the discrepancy is balanced among supporters of each group. By combining these two ideas, we obtain the last measure, called \emph{group discrepancy imbalance} $\phi$, defined as:
\[
\phi(a,b) = \frac{|\sum_{v \in V} v(ab)|}{\sum_{v\in V}|v(ab)|}.
\]

It is~$0$ when the total discrepancies of the groups are equal and~$1$ when they are totally imbalanced (in that case $\alpha = 0$). As expected, contrary to $\gamma$, metric~$\phi$ is sensitive to the size of the groups. Particularly, if all rank differences are equal ($\beta = \nicefrac{|v(ab)|}{m-1}$ for all voters $v$), then $\gamma = 1$, but $\phi = 1 - \alpha$.

With group discrepancy imbalance, we can actually rewrite two of our rules in the general case: MaxNash and MaxSwap. For MaxNash, we have that the score of the pair is proportional to $\beta^2(1-\phi^2)$ and for MaxSwap to $\beta(1-\phi)$, so they both care about this imbalance between the two groups. The other rules cannot be defined using $\phi$ in general, but in the particular case in which all rank differences are the same, we can simply replace $\alpha$ by $1-\phi$, which gives $\beta(1-\phi^2)$ for MaxSum (see~\Cref{app:sec:antipathy-and-partitioning} for calculations).%

\appendixsubsection{Expressing MaxSwap and MaxNash Using Partitioning Ratio}{%
Let us denote
\begin{align*}
    \text{max}_g &= \max(\sum_{v \in \Vab} v(ab),\sum_{v \in \Vba} v(ba)), \\
    \text{min}_g &= \min(\sum_{v \in \Vab} v(ab),\sum_{v \in \Vba} v(ba)).
\end{align*}
We have:
\begin{align*}
    \beta  n(m-1) \phi &= |\sum_{v \in V} v(ab)| &= \text{max}_g - \text{min}_g, \\
    \beta  n(m-1) &= \sum_{v \in V}|v(ab)| &= \text{max}_g + \text{min}_g.
\end{align*}
Thus, we can rewrite the MaxSwap score: 
\begin{align*}
\text{min}_g
&= \frac{\beta  n(m-1) - (\beta   n(m-1)\phi) }{2}\\
    &=\frac{\beta  n(m-1) (1- \phi)}{2} \\
    &= C\beta(1-\phi)
\end{align*}
with $C$ being a constant independent of the pair. This is enough to show that the MaxSwap score is proportional to $\beta(1-\phi)$. Similarly, we can compute MaxNash:
\begin{align*}
&\text{min}_g \cdot \text{max}_g \\
=&\frac{(\beta  n(m-1) - (\beta  \phi  n(m-1)))(\beta   n(m-1) + (\beta  \phi  n(m-1))) }{4}\\
    =&\frac{\beta  n(m-1) (1- \phi)\beta n(m-1)(1+\phi)}{4} \\
    = &C\beta^2(1-\phi^2) 
\end{align*}
with $C$ being a constant independent of the pair.

}
Because of this, we introduce a new property that enable us to distinguish some rules.

\begin{definition}[Balance Preference]
    A rule satisfies \emph{balance preference} if given two pairs $\{a,b\}$ and $\{x,y\}$, if there exists a perfect matching from voters to voters $\phi : V \rightarrow V$ such that $|v(ab)| = |\phi(v)(xy)|$ (i.e., their vectors of absolute rank differences are the same), and if $|\sum_{v\in V} v(ab)| < |\sum_{v \in V} v(xy)|$ (i.e., discrepancy between supporters of $a$ and of $b$ is more balanced), then $\{x,y\}$ cannot be selected.
\end{definition}

To better understand this axiom, consider the profile $P = \{2 \times x{\succ}a{\succ}b{\succ}y,
    1 \times a{\succ}y{\succ}x{\succ}b,
    1 \times b{\succ}y{\succ}x{\succ}a\}$. 
In this profile, voters are evenly divided between $x$ and $y$, but $x$ supporters are extreme while $y$ supporters are quite indifferent. On the other hand, $a$ is preferred to $b$ by 3 voters, but they each have at least one extreme supporter. In that scenario, $\{x,y\}$ is more balanced in terms of group size, but $\{a,b\}$ is more balanced in terms of average polarization. More precisely, $\phi(x,y) = \frac{6-2}{6+2} = \frac{1}{2}$, and $\phi(a,b) = \frac{5-3}{5+3} = \frac{1}{4}$. Moreover, the pairs satisfy the condition of the balance preference axiom, so if a rule satisfies it, $\{x,y\}$ should not be selected. This is why MaxSwap and MaxNash select $\{a,b\}$.  However, $\beta(a,b) = \beta(x,y)$ and $\alpha(a,b) < \alpha(x,y)$, so all MaxPolar rules select $\{x,y\}$, and we can also show that MaxSum selects $\{x,y\}$. %

More generally, we can state the following proposition, which is proven in \Cref{app:sec:antipathy-and-partitioning}.
\begin{proposition}\label{prop:balancepref}
    MaxNash and MaxSwap satisfy balance preference and MaxSum and MaxPolar rules fail it.
\end{proposition}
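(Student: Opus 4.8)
The plan is to split the argument into a \emph{positive part} (MaxNash and MaxSwap satisfy balance preference) and a \emph{negative part} (MaxSum and every $p$-MaxPolar rule fail it), and in both parts to lean on the reformulations of the rules' scores in terms of discrepancy $\beta$ and group discrepancy imbalance $\phi$ established in~\Cref{app:sec:antipathy-and-partitioning}, namely that within a fixed profile the MaxNash score of a pair is proportional to $\beta^2(1-\phi^2)$ and the MaxSwap score to $\beta(1-\phi)$, with a proportionality constant depending only on $n$ and $m$.

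For the positive part, suppose $\{a,b\}$ and $\{x,y\}$ satisfy the hypothesis of balance preference: there is a perfect matching of voters with $|v(ab)| = |\phi(v)(xy)|$ for all $v$, and $|\sum_{v\in V} v(ab)| < |\sum_{v\in V} v(xy)|$. Since the matching is a bijection, the multisets $\{|v(ab)|\mid v\in V\}$ and $\{|v(xy)|\mid v\in V\}$ coincide, so in particular $\sum_{v} |v(ab)| = \sum_{v} |v(xy)|$ and hence $\beta(a,b) = \beta(x,y)$; note also this common value is strictly positive because candidates are distinct. Dividing the strict inequality on the absolute column sums by this common value yields $\phi(a,b) < \phi(x,y)$, with both values in $[0,1]$. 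As $t\mapsto 1-t$ and $t\mapsto 1-t^2$ are strictly decreasing on $[0,1]$ and $\beta(a,b)=\beta(x,y)>0$, the pair $\{a,b\}$ gets a strictly larger MaxSwap score and a strictly larger MaxNash score than $\{x,y\}$. Since both rules output exactly the set of score maximizers, $\{x,y\}$ is never selected; this also covers the degenerate case $\phi(x,y)=1$ (where $\{x,y\}$ scores $0$).

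For the negative part, I would reuse the profile exhibited just before the proposition, $P = \{2\times (x\succ a\succ b\succ y),\ 1\times (a\succ y\succ x\succ b),\ 1\times (b\succ y\succ x\succ a)\}$, and first verify that $\{a,b\},\{x,y\}$ meet the axiom's hypothesis: the column of $|v(ab)|$-values is $(1,1,3,3)$ and that of $|v(xy)|$ is $(3,3,1,1)$ (equal multisets, so an appropriate voter swap is the required matching), while $|\sum_v v(ab)|=2<4=|\sum_v v(xy)|$. Then I would tabulate the relevant quantities for all six pairs. For MaxSum one computes score $14$ for $\{a,b\}$ and $16$ for $\{x,y\}$, with the other four pairs scoring at most $11$, so $\{x,y\}$ is the unique MaxSum winner — a violation. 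For the $p$-MaxPolar family one checks $\alpha(x,y)=1$, $\beta(x,y)=\tfrac23$, whereas every other pair has $\alpha=\tfrac12$ and $\beta\le\tfrac23$; hence $\alpha(x,y)\beta(x,y)^p=(\tfrac23)^p>\tfrac12(\tfrac23)^p\ge \alpha\beta^p$ for every competing pair and every $p>0$, so $\{x,y\}$ is again the unique winner, for all $p$.

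The only step requiring genuine care is the negative part: one must make sure that \emph{no third pair} ties or outscores $\{x,y\}$ under MaxSum (and, separately, under each MaxPolar rule), which is why I would compute the scores (or the $(\alpha,\beta)$ pairs) of all six pairs rather than only of $\{a,b\}$ and $\{x,y\}$. Everything else is routine arithmetic plus the monotonicity observation used in the positive part.
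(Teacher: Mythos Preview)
Your proposal is correct and follows essentially the same route as the paper: for the positive part you exploit the reformulations of MaxNash and MaxSwap scores as (constants times) $\beta^2(1-\phi^2)$ and $\beta(1-\phi)$, observe that the matching hypothesis forces $\beta(a,b)=\beta(x,y)$ and then $\phi(a,b)<\phi(x,y)$, exactly as the paper does; for the negative part you use the very profile the paper introduces just before the proposition. Your treatment is in fact slightly more careful than the paper's, since you explicitly check all six pairs to confirm that $\{x,y\}$ is the \emph{unique} MaxSum and $p$-MaxPolar winner (rather than only comparing $\{x,y\}$ to $\{a,b\}$), and your numbers are all correct.
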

\appendixproof{Proposition}{prop:balancepref}{%
\begin{proof}
For MaxSwap and MaxNash, we can simply use the fact that they are proportional
to respectively $\beta (1-\phi)$ and $\beta^2 (1-\phi^2)$. Indeed, note that if
there are two pairs $\{a,b\}$ and $\{x,y\}$ such that there exists a perfect
matching $f$ from voters to voters  such that $|v(ab)| = |f(v)(xy)|$ for all
voters, this implies that $\beta(a,b) = \beta(x,y)$ by definition of $\beta$.
This also implies that $\sum_{v \in V}|v(ab)| = \sum_{v \in V}|v(xy)|$. Then, if
$|\sum_{v\in V} v(ab)| < |\sum_{v \in V} v(xy)|$, this implies $\phi(a,b) <
\phi(x,y)$ by definition of $\phi$. Thus, $\beta(a,b)(1-\phi(a,b)) >
\beta(x,y)(1-\phi(x,y))$ so MaxSwap will never select the pair $\{x,y\}$. The
other formula gives the proof for MaxNash.

For MaxSum and MaxPolar, the example provided in \ref{sec:balance} proves that these rules fail this property.
\end{proof}%
}

\section{Experiments} \label{sec:experiments}

We motivate our experiments with two goals: Compare conflictual rules to traditional committee rules, and compare conflictual rules with each other.

\subsection{Comparison with Standard Committee Rules}
First, we compare MaxNash and two well-known committee rules Chamberlin-Courant and Borda, which respectively aim at achieving diversity and individual excellence. We chose MaxNash as a representative of all conflictual rules because all of them gave qualitatively same results. %

Under \emph{Borda} rule, each voter assigns~$m-1$ points to her favorite candidate, $m-2$ points to the second one, $m-3$ to the third one, and so on. Finally, we select the candidates with the highest scores. 

With the \emph{Chamberlin-Courant} rule (CC) and $k=2$, each voter assigns~$m - \min(v(x), v(y))$ points to the pair of candidates $\{x,y\}$, and the pair with the highest score is selected. Thus, only the preferred candidate in the pair is taken into account. 

To get the intuition about differences between these rules and conflictual ones, we use the 2D-Euclidean framework, studied by~\citet{elk-fal-las-sko-sli-tal:c:multiwinner-voting}. In this setting, voters and candidates are associated with their ideal positions in the 2D-Euclidean space, and the preferences of voters are based on their distances to candidates: a voter~$v$ prefers $a$ to $b$ if $v$ is closer to~$a$ than to~$b$.  

\begin{figure}[t]
    \centering
    \includegraphics[width=0.37\textwidth]{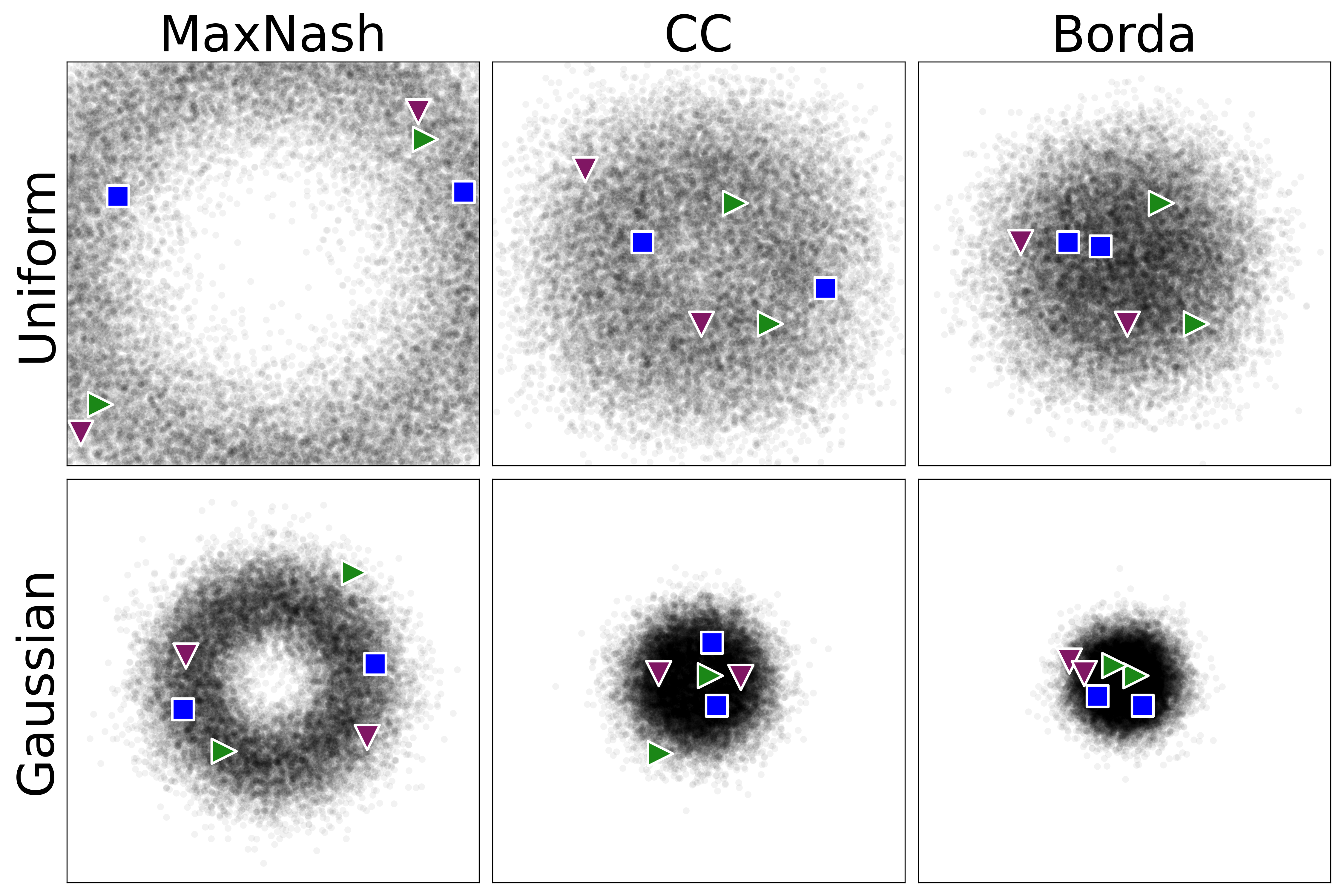}
    \caption{Distribution of the positions of the winning candidates for different rules and distributions of positions. Each pair of colored points correspond to the winners of a single election.}
    \label{fig:2d-pos}
\end{figure}
 
For all voters and candidates, we sample their ideal points from: (1) the uniform distribution in $[0,1]^2$ and (2) the normal distribution centered in $(0.5,0.5)$. In~\Cref{fig:2d-pos} we present the results. We sampled $10,000$ instances, and for each of them marked the two selected candidates on the plane. For three random instances, we marked the winning candidates by colors.

Supporting the axiomatic analysis, the results show that
the conflictual rules behave significantly different from the classical ones. Indeed, they select candidates that are far away from the center of the plane
while all other rules select those that are around 
the center (the closest for the Borda rule, which aims for individual excellence). We also computed partitioning ratio and discrepancy for all pairs of candidates (see \Cref{appsec:expe-results}). Their values confirm that, on average, the pairs selected by Borda and CC have lower values than those selected by the conflictual rules. As expected, this tendency is particularly strong for discrepancy\,---\,in our scenario, if two candidates are close to each other on the plane, then they are relatively close to each other in every ranking.

\subsection{Comparisons of Conflictual Rules} \label{sec:expe-conflicting-rules}

Next, we compare the conflictual rules with each other.
In particular, we focus on MaxSwap, MaxNash, MaxSum, and 2-MaxPolar. %
We ran our experiments on both synthetic and real-life datasets. The synthetic ones include 2D-Euclidean models introduced in the previous subsection, and Mallows model (which are discussed in more depth in Appendix \ref{appsec:expe-mallows}).

For real-life data, we focus on 4 datasets: (i) preferences over $11$ candidates
gathered for experiments during French presidential elections from
2017 and 2022~\citep{voterautrement2018,voterautrement2022}, which we expect to be conflictual, and that we
discuss in more extent in
\Cref{sec:exp-conflictual}, %
similar to a 2-Mallows model, 
(ii) preferences over $10$ sushi types~\citep{sushis} %
and (iii) juries' ranking of contestant performances in figure skating competitions, from Preflib~\citep{preflib}, which we expect to be less conflictual. 
A detailed analysis %
can be found in Appendix \ref{appsec:expe-real}.

\begin{figure}[t]
    \centering
    \includegraphics[width=0.45\textwidth]{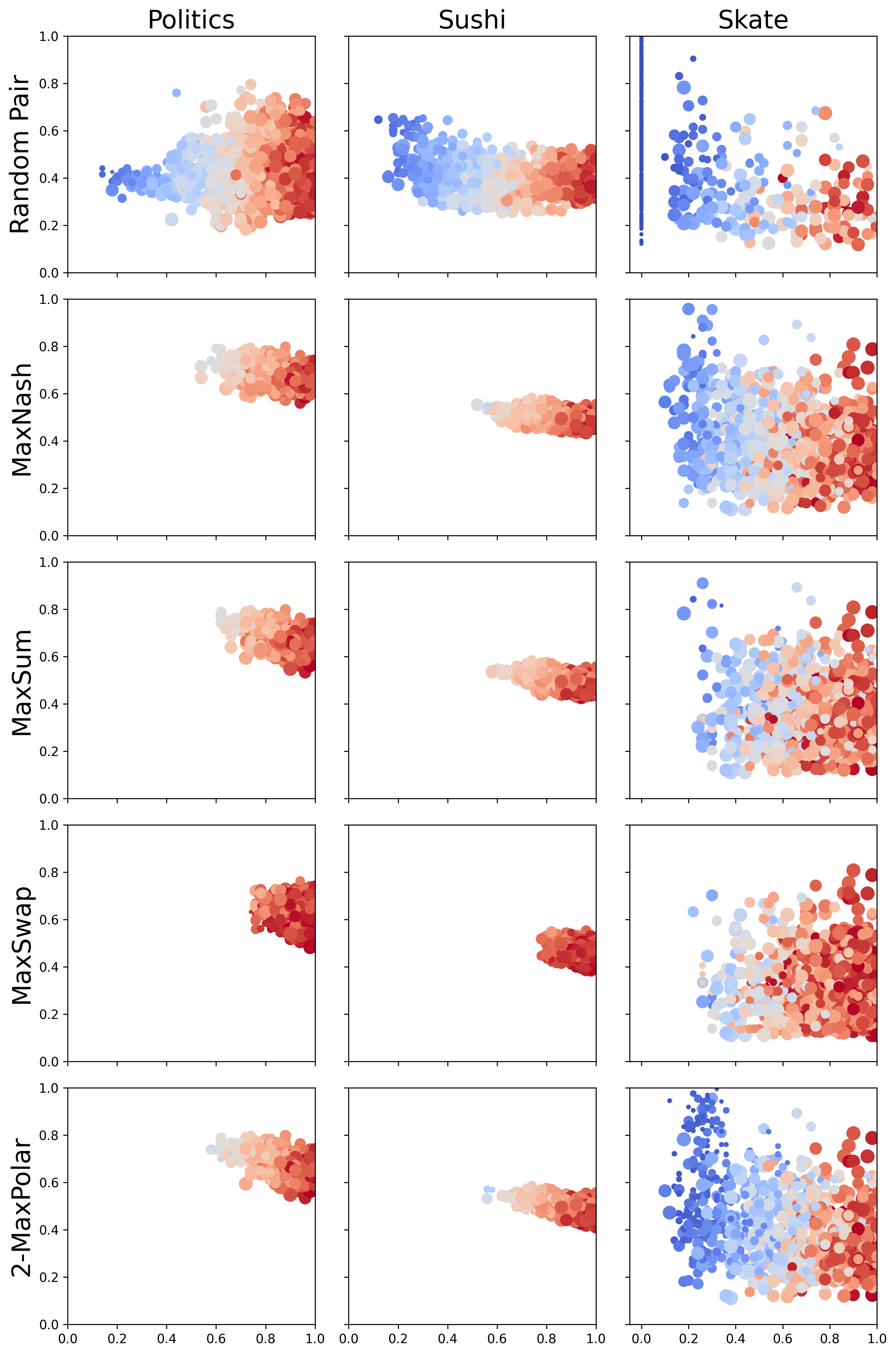}
    \caption{Metrics values of the selected pairs of candidates for real data. The coordinates of the pair $\{a,b\}$ are $(\alpha(a,b),\beta(a,b))$, its size is $\gamma(a,b)$ and its color is determined by $\phi(a,b)$ with red corresponding to higher values. These data were gathered from 1000 profiles of 100 voters and 10 candidates.}
    \label{fig:res-real-small}
\end{figure}

In this set of experiments, we sampled rankings from the dataset over a subset of candidates in the case of real data or from a probability model in the case of synthetic data, in order to always have the same number of voters $n = 100$ and candidates $m=10$. We then look which pair of candidates would be selected by each rule. Then, we compare the values of their polarization metrics (partitioning ratio, discrepancy, discrepancy balance and group discrepancy imbalance).
This experiment is repeated over $1000$ random profiles. \Cref{fig:res-real-small} shows each selected pair~$\{a,b\}$ as a dot with coordinate $(\alpha(a,b), \beta(a,b))$; its size is
proportional to $\gamma(a,b)$ and the redder the dot the higher the value of $\phi(a,b)$. The first row contains these values for randomly selected pairs of candidates, in order to have an idea of the actual distribution of the pairs in these datasets. The figures for synthetic data and for more rules can be found in the Appendix \ref{appsec:expe-results}, as well as figures comparing the average values of the polarization metrics for the different rules.

The first observation, based on the first row of the figure in which we plot points associated to random pairs of candidates, is that the three datasets are very different in terms of conflict. The most conflictual seems to be the political one, with most pairs having both a high partitioning ratio $\alpha$ and discrepancy $\beta$. The sushi dataset is already less conflictual: Some pairs have higher $\alpha$, but their $\beta$ is quite low, and conversely. This is even worse for the ice skating dataset, as rankings rely a lot on the quality of participants' performance. %

The experiments confirm our theoretical
analysis of the formula based on $\alpha$ and
$\beta$, that we obtained assuming
the same distance between the candidates
for every vote. In particular, we clearly see that MaxNash
puts the most emphasis on $\beta$, while MaxSwap
puts the most
emphasis on having a clear division between $a$ and $b$ supporters (i.e.,
maximizing $\alpha$), but MaxSwap is also giving a lot of importance to group
discrepancy imbalance. On the contrary, 2-MaxPolar ignores
the discrepancy
balance, which explains why it might return pairs of candidates with smaller and
bluer dots.

\subsection{Results on a Conflictual Election} \label{sec:exp-conflictual}

\begin{table}[t]
    \centering
    \small
    \setlength{\tabcolsep}{2.5pt}
    \begin{tabular}{l|ccc|ccc}
             &   \multicolumn{3}{c}{2017} & \multicolumn{3}{c}{2022} \\ \midrule
       MaxSwap  & Far-left & \fight &Far-right   & Far-left &\fight& Far-right\\
       MaxNash  &  Socialist& \fight &Far-right &  Left& \fight& Far-right\\
       MaxSum &   Socialist&\fight& Far-right & Far-left  &\fight& Far-right\\
       2-MaxPolar &   Far-left&\fight& Far-right & Far-left&  \fight &Far-right\\
       \midrule 
       Borda & Left &\fight & Liberal  & Left &\fight& Green \\
       CC & Left&  \fight& Conservative& Green& \fight&  Far-right \\
    \end{tabular}
    \caption{Selected pairs for different rules.}
    \label{tab:res_french}
\end{table}

We now consider the political data, featuring the most conflictual
preferences, that was gathered by the {\em Voter Autrement initiative}
during the 2017 and 2022 presidential elections.%
\footnote{These datasets were collected by an online poll that asked voters to try alternative voting methods, including Instant Runoff Voting, thus giving us ranking data. We re-weighted the voters based on their vote at the official election such that the distribution is more faithful to the real distribution of opinion (and thus to the real polarization). Moreover, we only kept full rankings of preferences (voters had the possibility to rank only their top $k$ alternatives). This gives us $n = 5755$ (resp. $412$) voters and $m=11$ (resp. $12$) candidates for the 2017 (resp. 2022) dataset.}

\Cref{tab:res_french} summarizes the candidates selected by each rule. We also added the results for classic voting rules for comparison. We chose to put the political labelization of candidates instead of their names. While the non-conflictual rules return two {\em popular} candidates (e.g. the main candidate from the left and the main candidate from the right), conflictual rules tend to select at least one extreme candidate, if not two. Rules that put more emphasis on discrepancy like MaxNash would select more well-known candidates even if they are dividing the society less evenly, as voters have strong preferences on them. On the contrary, MaxSwap might select less well-known candidates, but who divide the society more evenly. MaxSum and 2-MaxPolar lie in between these two extremes.

\section{Conclusions and Future Work}
We proposed and analyzed rules that aim at selecting the most conflicting
candidates, and have shown that these rules fundamentally differ from the
standard ones. Together with the proposed metrics, our rules allow us to better
understand the structures causing conflict in the electorate.

A natural (and fairly non-trivial) follow up direction is to analyze extending
these rules and axioms to selecting more than just two candidates. Since our
rules provide a score for each pair of candidates, one polynomial-time
computable possibility would be to simply select such a committee, that
maximizes the smallest score among all pairs of candidates. Nonetheless, other
approaches evaluating the whole committee at the same time, rather than
pairs of candidates might turn out to be superior.

\section*{Acknowledgements}
This project has received funding from the European Research Council (ERC) under
the European Union’s Horizon 2020 research and innovation programme (grant
agreement No 101002854), and from the French government under management of
Agence Nationale de la Recherche as part of the "Investissements d'avenir"
program, reference ANR-19-P3IA-0001 (PRAIRIE 3IA Institute). The research
presented in this paper is supported in part from the funds assigned by Polish
Ministry of Science and Technology to AGH University.
\begin{center}
  \includegraphics[width=3cm]{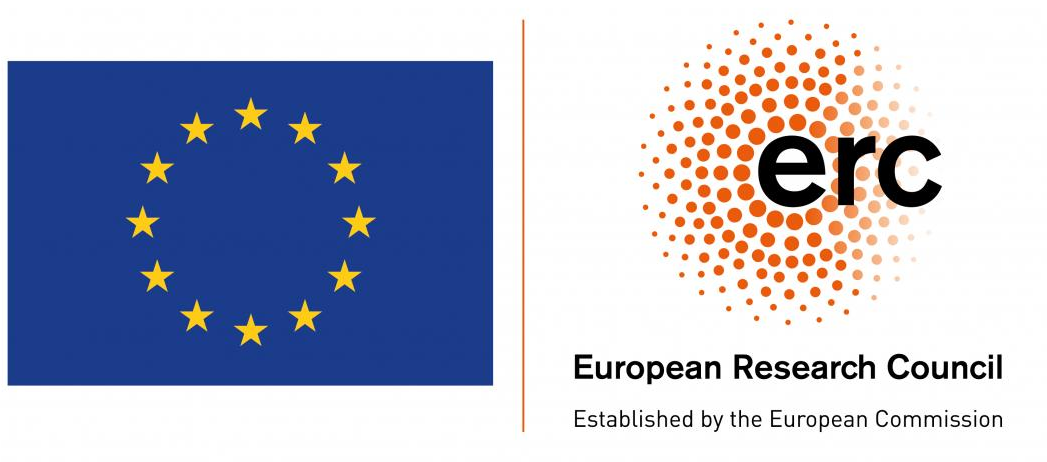}
\end{center}

\clearpage

\bibliographystyle{abbrvnat}
\bibliography{cvr}

\clearpage
\begin{center}
    \huge \textbf{Appendix}\bigskip
\end{center}

\appendix

\section{Missing Proofs}\label{apdx:proofs}

\appendixProofs

\subsection{Discrepancy}\label{apdx:discrepancy}
We prove here that $\beta_{\max} \in (\nicefrac{1}{3},1]$.
    For one voter, the sum of all $v(ab)$ for each pair of candidates is equal to 
    \begin{align*}
        \sum_{a,b \in C} |v(ab)| &= \sum_{i=1}^{m-1} \sum_{j=1}^i j \\
        &= \sum_{i=1}^{m-1} \frac{i(i+1)}{2} \\
        &= \frac{(m-1)(m)(2m-1)}{12} + \frac{m}{m-1}{4} \\
        &= \frac{(m-1)m(2m-1+3)}{12} \\
        &= \frac{(m-1)m(m+1)}{6}
    \end{align*}
    In the full profile, the sum of all $|v(ab)|$ is then equal to $\frac{n(m-1)m(m+1)}{6}$. Thus, the average sum of $|v(ab)|$ of a pair is equal to  
    \begin{align*}
       \frac{ \sum_{v\in V} \sum_{a,b \in C} |v(ab)|}{\sum_{a,b \in C} 1} &= \frac{\frac{n(m-1)m(m+1)}{6}}{\frac{m(m-1)}{2}} \\ 
       &= \frac{n(m+1)}{3}
    \end{align*}
    By pigeonhole principle, there exists at least one pair $\{a,b\}$ which have a sum higher than this average. Moreover, this average is achieved in the impartial culture, in which all pairs have the same score. Finally, this sum corresponds to the following value for $\beta$:
    \begin{align*}
       \beta(a,b) &= \frac{1}{n\cdot(m-1)}\sum_{v} |v(ab)|\\
       &\ge  \frac{1}{n\cdot(m-1)} \frac{n(m+1)}{3} \\
       &\ge \frac{1}{3}  \frac{m+1}{m-1}\\
       &\ge \frac{1}{3}
    \end{align*}
\paragraph{Characteristic Elections.}\label{apdx:compass}
Observe that the Antagonism (AN) profile have $\alpha_{\textrm{max}} = \beta_{\textrm{max}} = 1$. Then, in the Identity (ID) profile, $\alpha_{\textrm{max}}= 0$ but $\beta_{\textrm{max}} = 1$. Finally, in the Uniformity (UN) profile, we have $\alpha_{\textrm{max}} = 1$ but $\beta_{\textrm{max}} = \frac{m+1}{3(m-1)}$.

\section{Experiments}

In this section, we present in more depth the experiments conducted on our
conflicting rules. Note that the order of the subsections have been changed. In
\Cref{appsec:expe-mallows}, we present the experiments on Mallows models, in
\Cref{appsec:expe-real}, we present a more detailed analysis of the real
datasets. Finally, in \Cref{appsec:expe-results} we provide the results on all models.

\subsection{Mallows models} \label{appsec:expe-mallows}

In this section we present some experiments on Mallows models, and how conflicting the election  generated with these models are, depending on the used parameters.

The Mallows model takes two parameters: a \emph{central ranking} $\sigma$ and
parameter $\psi$. The idea is the following: $\sigma$ is the average ranking of
the voters, but they all deviate a bit from it. The higher the $\psi$, the
higher the deviation. More formally, if we denote by~$\text{KT}(\sigma_1,\sigma_2)$
the Kendall-tau distance between two rankings $\sigma_1,\sigma_2$ (i.e.,\ the
number of swaps needed to go from $\sigma_1$ to $\sigma_2$), then the
probability to generate the ranking $\sigma'$ from the Mallows model with
parameters $(\sigma,\psi)$ is $\psi^{\text{KT}(\sigma,\sigma')}/C$ where $C$ is
the normalization constant.

If $\psi = 0$, the profile is the Identity and everyone has the same ranking
$\sigma$; if $\psi = 1$, we obtain impartial culture and each ranking has the
same probability to be generated. So a Mallows instance with $0 < \psi < 1$
would be something that is between the Identity and Uniformity, and it will
probably not be antagonizing, as most people would have a similar ranking. 

To have more antagonistic profiles, we can sample rankings from \emph{two}
mallows models, sharing the same parameter $\psi$ but with different central
rankings $\sigma_i$ with $i \in \{1,2\}$. Then, we sample each ranking in the
profile randomly from one Mallows model with probability $1/2$. We expect those
2-Mallows models to have a more antagonistic structure.

To confirm this, we investigate 6 different models. We took $\psi \in \{0.1,
0.3, 0.6\}$ and generated random profiles from 1-Mallows and 2-Mallows models. Then, we look at the partitioning ratio $\alpha$, the discrepancy $\beta$, the discrepancy balance $\gamma$, and the group discrepancy imbalance $\phi$. \Cref{fig:mallows} shows these metrics for the different models. For $\psi = 0.1$, the voters preferences do not deviate much from the average ranking, thus there are some pairs with a high discrepancy (if they are far apart in the central ranking(s)). For 1-Mallows models, the partitioning ratio is very low for all pairs, but for 2-Mallows there is a clear division between the pairs that have higher partitioning ratio (if their relative order is different in the two central rankings $\sigma_1$ and $\sigma_2$) and the pairs that do not have it (i.e., if their relative order is the same). 

For $\psi = 0.3$ and the 1-Mallows model, the voters preferences deviate more from the central ranking, and we can already see that some pairs of candidates have low discrepancy $\beta$ and large partitioning ratio $\alpha$. However, we observe that the pairs with high $\alpha$ tend to have lower $\beta$ and vice-versa. It is less the case for the 2-Mallows model, but we also observe that the pairs have a higher partitioning ratio, i.e., the electorate is more divided. 

For $\psi = 0.6$, the preferences deviate so much from the central ranking(s) that we get very close to the impartial culture. This can be noticed by the fact that all pairs have similar metric values, and this in both models (1-Mallows and 2-Mallows). In particular, they have a high partitioning ratio (for impartial culture, the probability that $a \succ b$ and $b \succ a$ are almost the same), but an average value for discrepancy.

\begin{figure}[t]
    \centering
    \includegraphics[width=0.5\textwidth]{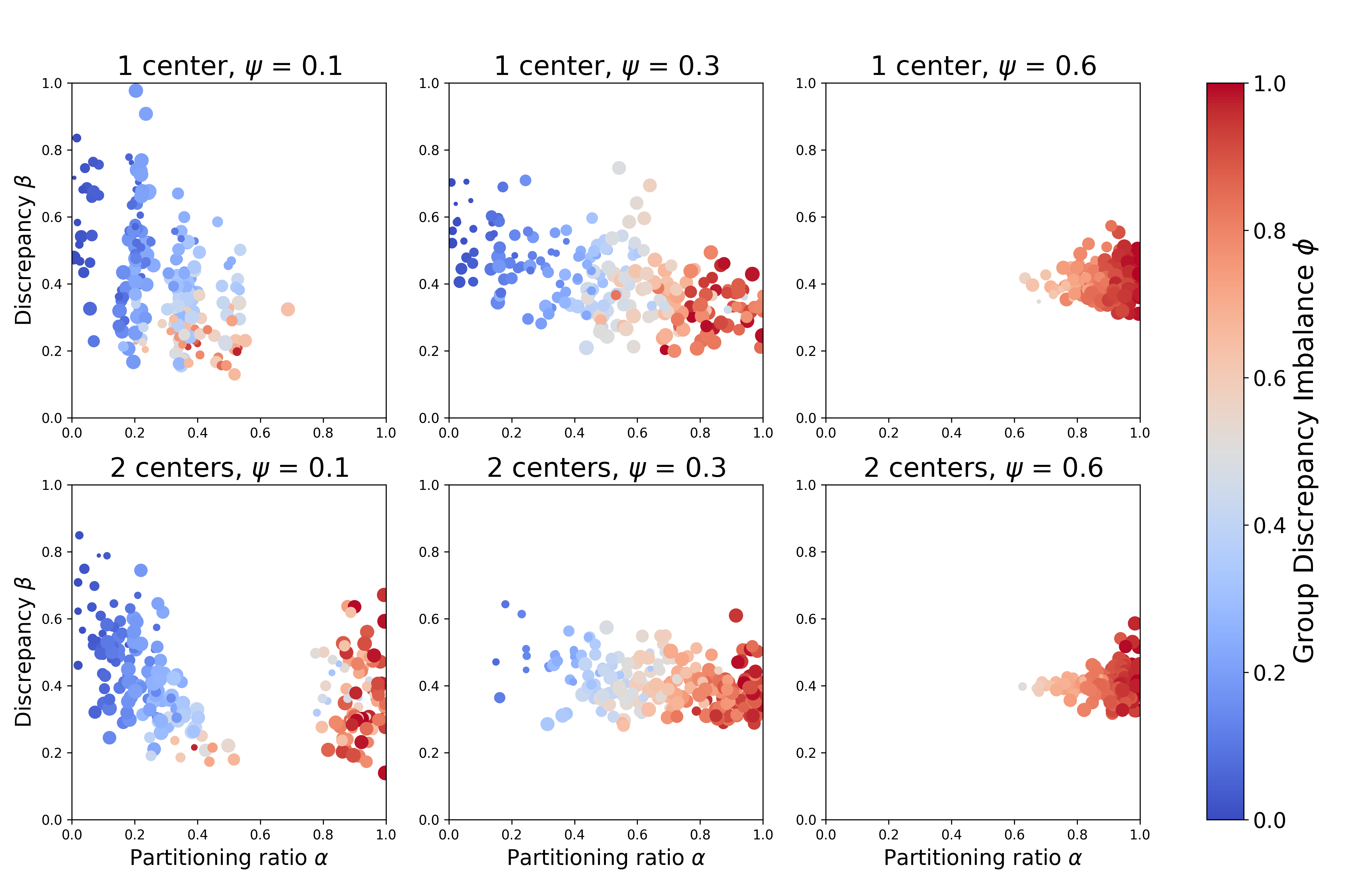}
    \caption{Metrics of the pairs of candidates with Mallows models. Each column corresponds to a parameter $\psi \in \{0.1,0.3,0.6\}$. The first row is for Mallows using 1 central ranking and the second row Mallows with 2 central rankings. The coordinates of the pair $\{a,b\}$ are $(\alpha(a,b), \beta(a,b)$, its size depends on $\gamma(a,b)$ and its color on $\phi(a,b)$. These data were gathered from 5 profiles for each model, each containing 1000 voters and 10 candidates.
    }
    \label{fig:mallows}
\end{figure}

\begin{figure}[t]
    \centering
    \begin{tikzpicture}

\definecolor{crimson2273120}{RGB}{227,31,20}
\definecolor{darkgray176}{RGB}{176,176,176}
\definecolor{greenyellow18622434}{RGB}{186,224,34}
\definecolor{lightgray204}{RGB}{204,204,204}
\definecolor{limegreen7522434}{RGB}{75,224,34}
\definecolor{slateblue7956214}{RGB}{79,56,214}

\begin{axis}[
legend cell align={left},
legend style={
  fill opacity=0.8,
  draw opacity=1,
  text opacity=1,
  at={(0.97,0.03)},
  anchor=south east,
  draw=lightgray204
},
tick align=outside,
tick pos=left,
x grid style={darkgray176},
xlabel={$\psi$},
xmin=0, xmax=1,
xtick style={color=black},
y grid style={darkgray176},
ylabel={Metrics values},
ymin=0, ymax=1,
ytick style={color=black},
height=5.1cm,width=8cm
]
\addplot [semithick, slateblue7956214, mark=*, mark size=2, mark options={solid}]
table {%
0 0
0.1 0.231133333333333
0.2 0.427130666666667
0.3 0.567047111111111
0.4 0.670825777777778
0.5 0.751882666666667
0.6 0.828278222222222
0.7 0.857457777777777
0.8 0.916997333333333
0.9 0.953928888888889
1 0.974720888888889
};
\addlegendentry{mean $\alpha$}
\addplot [semithick, crimson2273120, mark=*, mark size=2, mark options={solid}]
table {%
0 1
0.1 0.857686666666667
0.2 0.758597777777778
0.3 0.66944
0.4 0.610917777777778
0.5 0.542648888888889
0.6 0.514777777777778
0.7 0.466008888888889
0.8 0.448437777777778
0.9 0.431506666666666
1 0.424746666666667
};
\addlegendentry{max $\beta$}
\addplot [semithick, greenyellow18622434, mark=*, mark size=2, mark options={solid}]
table {%
0 0
0.1 0.605811779668681
0.2 0.670694691514365
0.3 0.720173700312688
0.4 0.773780279085839
0.5 0.819385908642286
0.6 0.87493191406333
0.7 0.897820808018615
0.8 0.936929010916018
0.9 0.958753507983522
1 0.970574874541157
};
\addlegendentry{mean $\gamma$}
\addplot [semithick, limegreen7522434, mark=*, mark size=2, mark options={solid}]
table {%
0 0
0.1 0.251851524622495
0.2 0.402016498721265
0.3 0.506709373019319
0.4 0.598911101961242
0.5 0.681196567958553
0.6 0.775218479445246
0.7 0.81178027598148
0.8 0.889608411209228
0.9 0.940637685349049
1 0.970207506812057
};
\addlegendentry{mean $\phi$}
\addplot [semithick, slateblue7956214, dashed, mark=*, mark size=2, mark options={solid}, forget plot]
table {%
0 0.496351111111111
0.1 0.569287111111111
0.2 0.646723555555555
0.3 0.729943111111111
0.4 0.782269333333333
0.5 0.820643555555556
0.6 0.871728888888889
0.7 0.913984888888889
0.8 0.934208
0.9 0.961023111111111
1 0.974987555555555
};
\addplot [semithick, crimson2273120, dashed, mark=*, mark size=2, mark options={solid}, forget plot]
table {%
0 0.817742222222222
0.1 0.705855555555556
0.2 0.635031111111111
0.3 0.591057777777778
0.4 0.530944444444444
0.5 0.50586
0.6 0.4833
0.7 0.459355555555556
0.8 0.43564
0.9 0.429195555555556
1 0.425044444444444
};
\addplot [semithick, greenyellow18622434, dashed, mark=*, mark size=2, mark options={solid}, forget plot]
table {%
0 0.274429453262787
0.1 0.655623188771598
0.2 0.725853348232508
0.3 0.793674013327474
0.4 0.827919329425819
0.5 0.865175345398584
0.6 0.902324380997061
0.7 0.933452992680248
0.8 0.947054927528143
0.9 0.96340550930736
1 0.970441527802214
};
\addplot [semithick, limegreen7522434, dashed, mark=*, mark size=2, mark options={solid}, forget plot]
table {%
0 0.335754392824319
0.1 0.458999559622525
0.2 0.562750200949937
0.3 0.658166004826033
0.4 0.719926643561405
0.5 0.766849512305874
0.6 0.830629449684167
0.7 0.884931753426254
0.8 0.912866835009222
0.9 0.950248592261619
1 0.970881199454743
};
\end{axis}

\end{tikzpicture}
    \caption{Mean or maximum values of different metrics for Mallows models with 1 (full line) or 2 (dotted line) central ranking(s). Each dot is averaged over 50 profiles.}
    \label{fig:mallows-phi}
\end{figure}
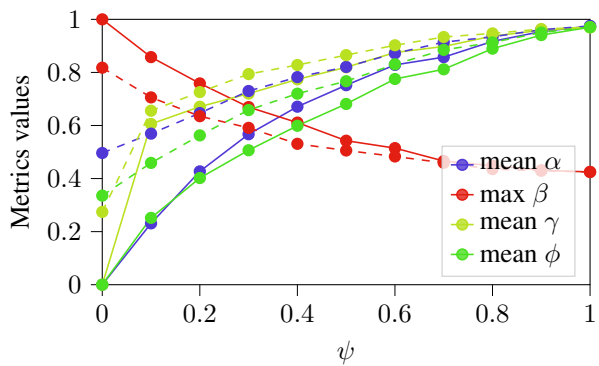

\Cref{fig:mallows-phi} shows the mean values (maximum value in the case of discrepancy) of some metrics for Mallows models, depending on the value of $\psi$ and the number of central rankings. First, we naturally observe more polarization when there are two central rankings, except for the maximum discrepancy. Second, it seems that the ``sweet spot'' of polarization, in which all $\beta$, $\alpha$, and $\gamma$ are high is between $\psi = 0.1$ and $\psi = 0.4$ (for 2-Mallows). Thus, in the next experiments, we will focus on these models.

\subsection{Analysis of the real dataset} \label{appsec:expe-real}

In this section, we describe in more details the real data that we used in the experiments presented in \Cref{sec:experiments}. The three datasets we considered are the following:
\begin{itemize}
		\item \textbf{French presidential elections:} These data come from surveys
			conducted in parallel to the actual 2017 and 2022 presidential elections
			in France. In this surveys, people were asked what would be their vote
			with alternative voting methods, in particular with instant runoff voting
			(IRV) \citep{voterautrement2018} for which voters had to give a rankings of
			the candidates. However, participants had the possibility to only rank a
			subset of the candidates. In these experiments, we removed all incomplete
			rankings and obtained $n=5,755$ rankings (resp. $412$) over $m=11$ (resp
			$12$) candidates for the 2017 (resp. 2022) dataset. Finally, we
			re-weighted the voters based on their vote at the official election to
			obtain a more faithful distribution of opinions. %
			distribution is more faithful to the actual distribution of opinion.
    \item \textbf{Sushi:} The sushi dataset, from Preflib~\citep{preflib}, contains preferences of $n=5,000$ voters over $m=10$ types of sushi.
    \item \textbf{Figure skating competition:} This dataset, also from Preflib, contains rankings of judges in figure skating competitions, after seeing the performances of the candidates. These rankings might contain ties that we decided to break arbitrarily. The dataset contains rankings for 49 competitions, each with between $m=10$ and $25$ candidates and between $n=8$ and $10$ judges.
\end{itemize}

We expect the political dataset to be the most conflictual, as people tends to have strong opinions on candidates at the presidential elections, and this kind of elections are based a lot around the ideological conflicts between the candidates. On the other hand, we expect the figure skating dataset to be the least conflictual, as judges will base their rankings on the performances of the candidates, and figure skating performance can be evaluated based on participants skills and advance of presented acrobatic figures.

In \Cref{fig:real-datasets}, we show the $(\alpha,\beta,\gamma,\phi)$-values for these 3 datasets. For the figure skating one, we took one random dataset among the 49 available, but they all have a similar structure regarding polarization. At first glance, the figure skating dataset structure looks very similar to what we obtained with 1-Mallows and $\psi = 0.1$. However, it differs as we still observe pairs of candidates with high partitioning ratio $\alpha$. However, these pairs with high partitioning ratio have low discrepancy. Conversely, there are pairs with very high discrepancy but low partitioning ratio. The first case corresponds to pairs of candidates that have very similar skills, but the jury is very divided between them. Thus, they are always next to each other in the rankings, but half of the voters think one is better and the other half that the other is better. The second case corresponds to pairs of candidates with a clear skills gap, such that all voters rank one very high and one very low, but they all rank them in the same order. 

The figure for the political dataset looks more like the ones of 2-Mallows, with pairs that have high partitioning ratio and others with low partitioning ratio but very few pairs with average partitioning ratio. Moreover, we see that some pairs have both high partitioning ratio and high discrepancy. Finally, the sushi dataset has a similar structure than 1-Mallows with $\phi \sim 0.5$, with not much pairs that are both conflicting globally (with high partitioning ratio) and locally (with high discrepancy).

\begin{figure}[t]
    \centering
    \includegraphics[width=0.5\textwidth]{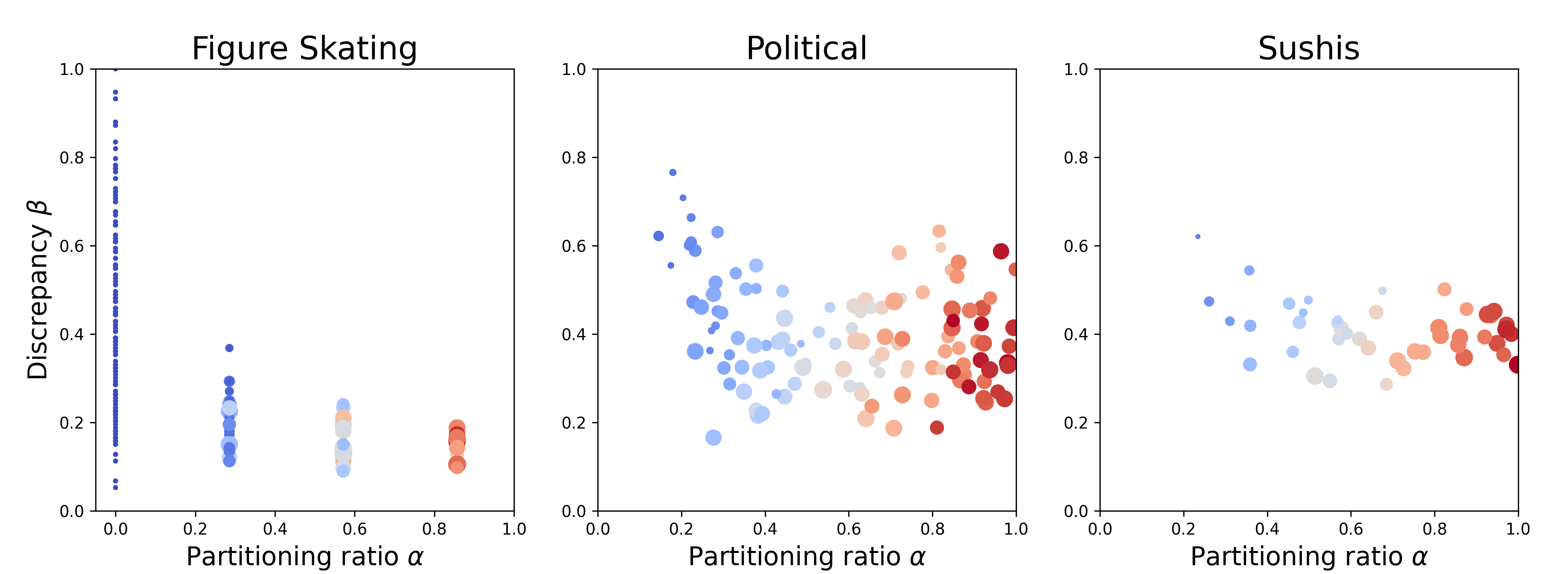}
    \caption{Metrics values of the pairs of candidates for different real datasets. The coordinates of the pair $\{a,b\}$ are $(\alpha(a,b),\beta(a,b))$, its size depends on $\gamma(a,b)$ and its color on $\phi(a,b)$.}
    \label{fig:real-datasets}
\end{figure}

\begin{table}[!t]
    \centering
    \small
    \setlength{\tabcolsep}{2.5pt}
    \begin{tabular}{l|ccc|ccc}
             &   \multicolumn{3}{c}{2017} & \multicolumn{3}{c}{2022} \\ \midrule
       MaxSwap  & Poutou & \fight & Le Pen   & Poutou &\fight& Zemmour\\
       MaxNash  &  Hamon & \fight & Le Pen &  Mélenchon& \fight& Le Pen\\
       MaxSum &   Hamon &\fight& Le Pen & Poutou &\fight& Le Pen\\
       2-MaxPolar &  Poutou &\fight& Le Pen & Poutou&  \fight &Le Pen\\
       \midrule 
       Borda & Mélenchon &\fight & Macron  & Mélenchon &\fight& Jadot \\
       CC & Mélenchon&  \fight& Fillon& Jadot& \fight&  Le Pen \\
    \end{tabular}
    \caption{Selected pairs for different rules.}
    \label{tab:res_french_2}
\end{table}

\begin{table}[!t]
    \centering
    \small
    \begin{tabular}{|l|l|c |c|}
    \toprule
         Name & Represented ideology  &  2017 & 2022 \\ \midrule
    Poutou & Far-left & $1.09\%$ & $0.77\%$ \\
      Mélenchon & Left & $19.58\%$ & $21.95\%$ \\
     Hamon &  Socialist & $6.36\%$ & N/A \\
    Jadot &  Green & N/A & $4.63\%$ \\
     Macron &  Liberal & $24.01\%$ & $27.85\%$ \\
      Fillon & Conservative & $20.01\%$ & N/A \\
       Le Pen &Nationalist& $21.30\%$ & $23.15\%$ \\
       Zemmour & Far-right& N/A & $7.07\%$\\
       \bottomrule
    \end{tabular}
    \caption{Score of some candidates at the presidential elections.}
    \label{tab:score_french}
\end{table}

For the political dataset, we complement the analysis from \Cref{sec:exp-conflictual}, with the names of the candidates selected by each rule in \Cref{tab:res_french_2}, and their score at the elections in \Cref{tab:score_french}. We can see that MaxSwap tends to select less popular candidates, but with higher $\alpha$, while MaxNash selects more popular ones.

\subsection{Complete results}\label{appsec:expe-results}

In this section, we present the results of our experiments for more rules (adding CC and Borda) and more datasets. All experiments were conducted on profiles consisting of $n=100$ voters and $m=10$ candidates, and averaged over $1000$ such profiles. For real data, we select a random subset of candidates and randomly draw rankings from the profile (based on weights if there are weights). We present here the results for (1) Metric models, (2) Mallows models and (3) Real data.

Figures from \ref{fig:res-metric-big} to \ref{fig:res-real} show the $(\alpha,\beta,\gamma,\phi)$-values of the selected pairs of candidates, as well as the ones of random pairs of candidates in the first row, to get an idea of the structure of the data. Figures \ref{fig:res-mean} and \ref{fig:res-zoom} show the average $(\alpha,\beta,\gamma,\phi)$-values of the selected pairs of candidates for each rule and model. The conclusions of these experiments are very similar to what we already discussed in \Cref{sec:experiments}.

\begin{figure*}
    \centering
    \includegraphics[width=0.58\textwidth]{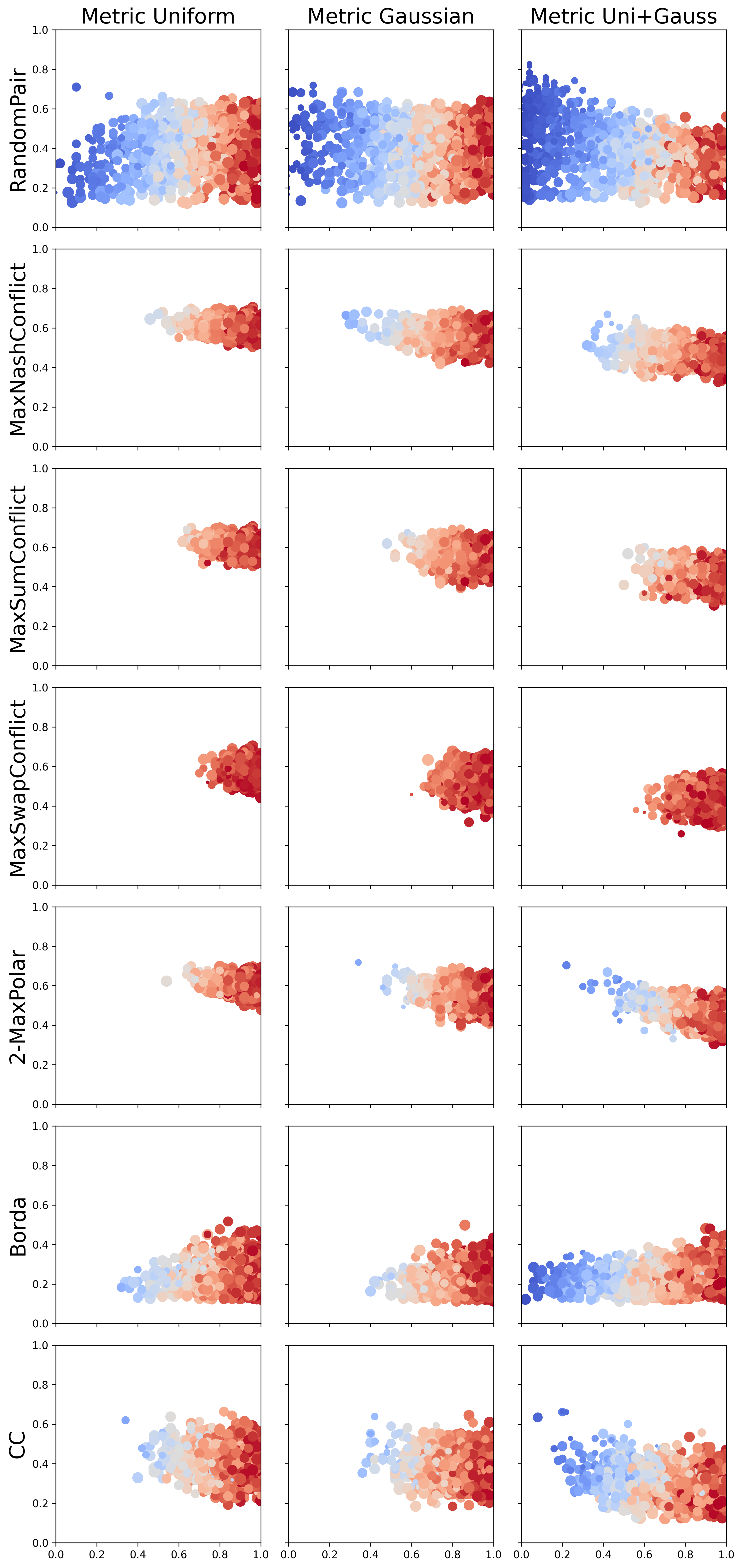}
    \caption{Results for distribution on 2D-Euclidean metric space, averaged over 1000 profiles of 100 voters and 10 candidates. ``Uni+Gauss'' means a distribution in which voters are drawn according to a Gaussian and candidates according to a Uniform distribution.}
    \label{fig:res-metric-big}
\end{figure*}

\begin{figure*}
    \centering
    \includegraphics[width=.6\textwidth]{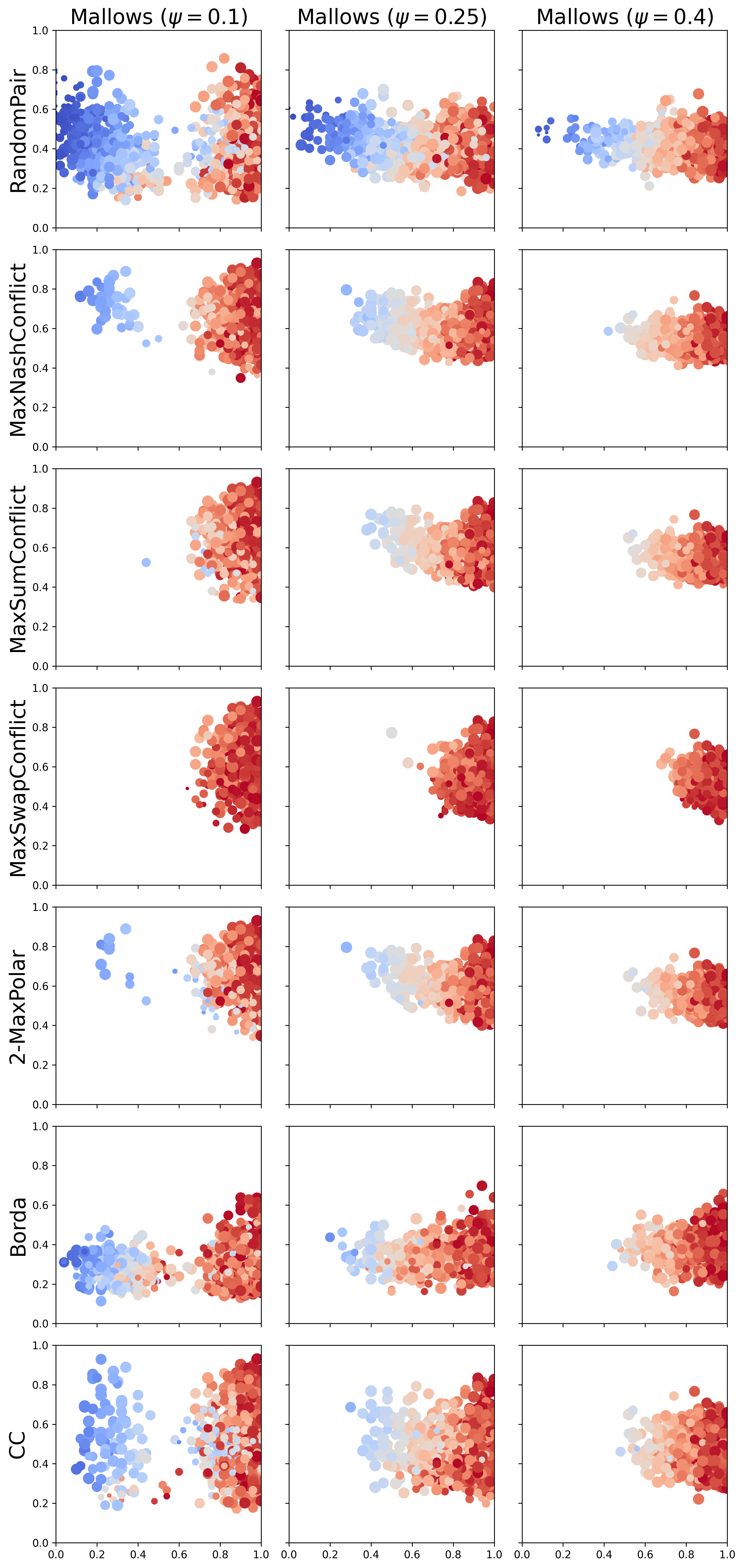}
    \caption{Results for 2-Mallows distributions, averaged over 1000 profiles of 100 voters and 10 candidates.}
    \label{fig:res-mallows}
\end{figure*}

\begin{figure*}
    \centering
    \includegraphics[width=.6\textwidth]{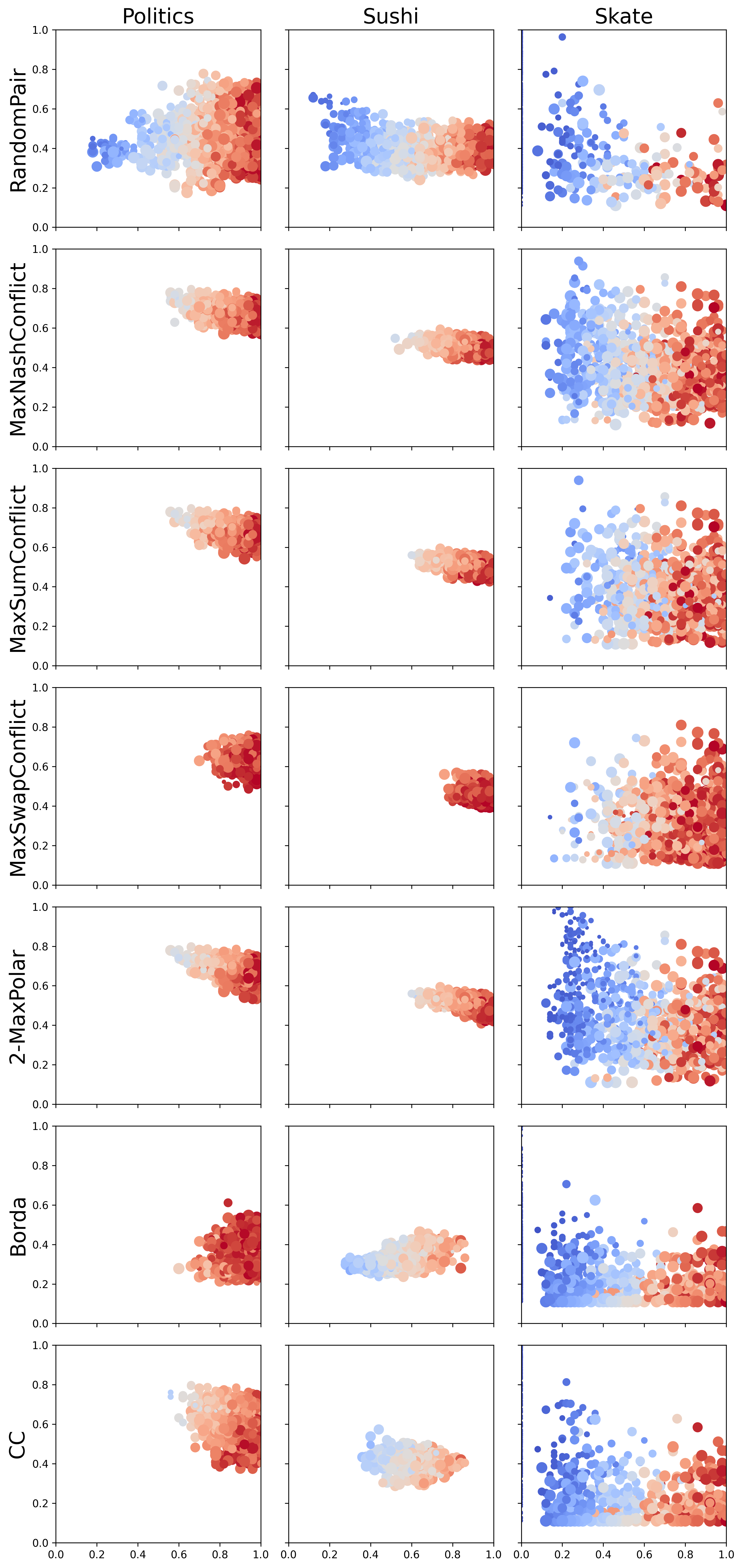}
    \caption{Results for real datasets, averaged over 1000 profiles of 100 voters and 10 candidates.}
    \label{fig:res-real}
\end{figure*}

\begin{figure*}
    \centering
    \includegraphics[width=.8\textwidth]{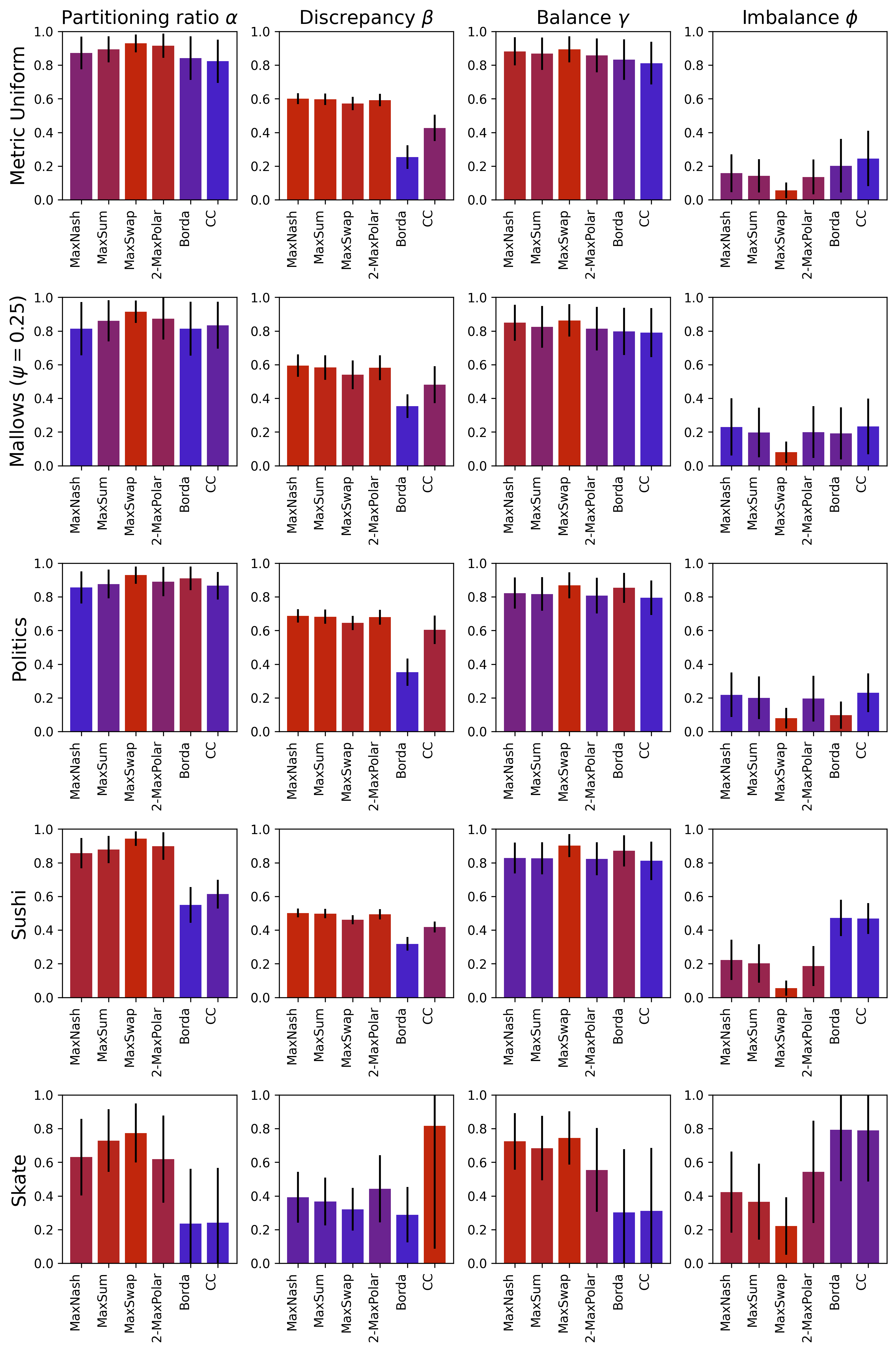}
    \caption{Average and deviation of metrics for different rules and different distributions, averaged over 1000 profiles of 100 voters and 10 candidates.}
    \label{fig:res-mean}
\end{figure*}

\begin{figure*}
    \centering
    \includegraphics[width=.8\textwidth]{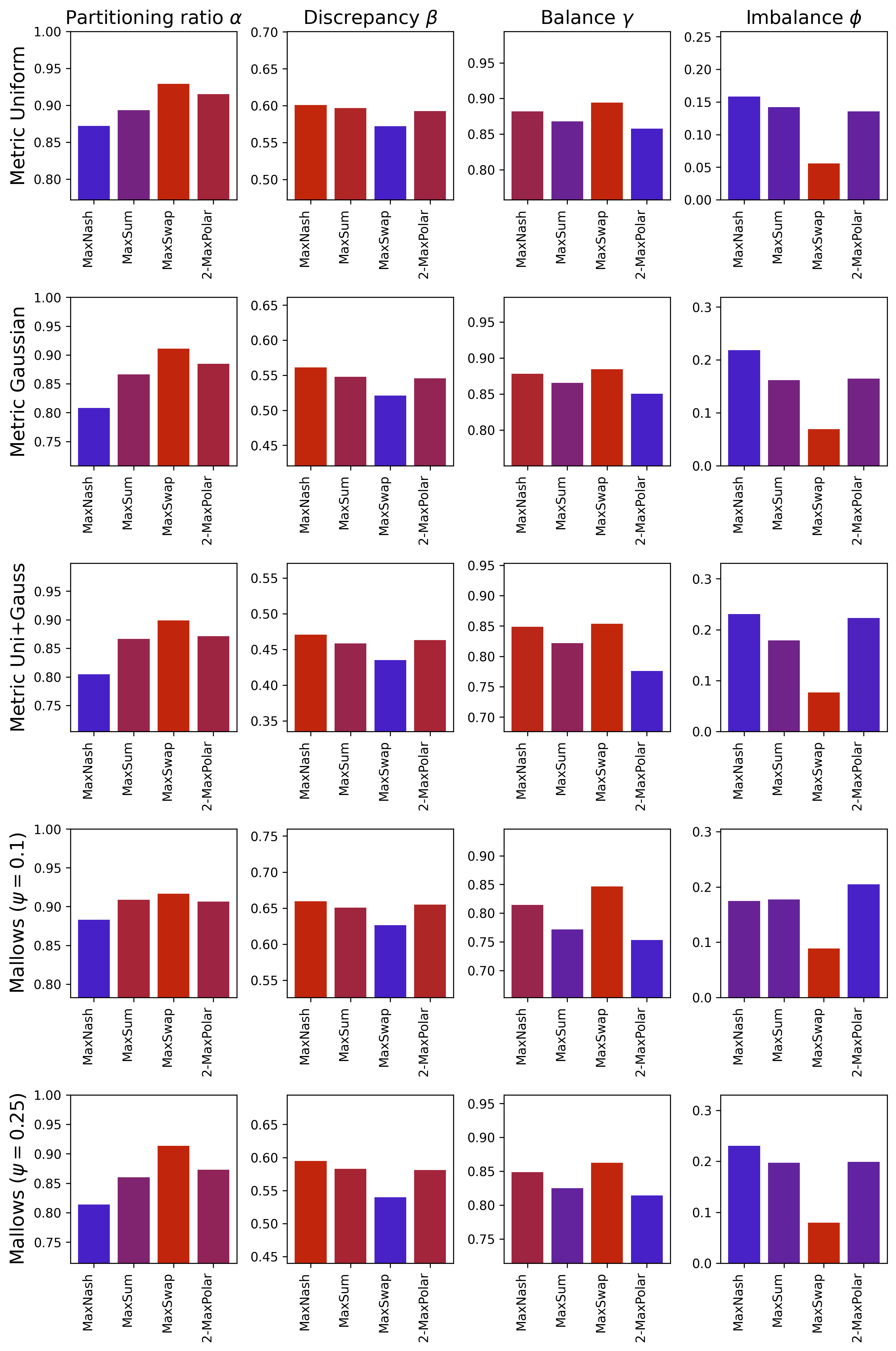}
    \caption{Average and deviation of metrics for different conflictual rules and different distributions, averaged over 1000 profiles of 100 voters and 10 candidates.}
    \label{fig:res-zoom}
\end{figure*}

\end{document}